\documentclass[journal]{IEEEtran}

\usepackage{amsmath,amsfonts,amssymb}
\usepackage{array}
\usepackage[caption=false,font=scriptsize,labelfont=sf,textfont=sf]{subfig}
\usepackage{textcomp}
\usepackage{stfloats}
\usepackage{url}
\usepackage{verbatim}
\usepackage{graphicx}
\usepackage{cite}
\usepackage{xcolor}
\usepackage{pifont}
\usepackage{amsthm}
\usepackage{mathrsfs}
\usepackage{booktabs}
\usepackage{cases}
\usepackage{comment}
\usepackage{empheq}
\usepackage{bm}
\usepackage{algorithm}
\usepackage{algorithmic}
\usepackage[hidelinks]{hyperref}
\hypersetup{
  pdftitle={An Embodied Multiagent Framework Based on Token Communications for Cooperative ISAC},
  pdfauthor={Jiahe Guo, Jun Du, Chunxiao Jiang, Jintao Wang}
}

\newtheoremstyle{twcstyle}
  {}        
  {}        
  {\normalfont}   
  {}
  {\bfseries}     
  {.}
  {0.5em}  
  {}
\theoremstyle{twcstyle}

\newtheorem{theorem}{Theorem}

\makeatletter
\renewcommand{\maketag@@@}[1]{\hbox{\m@th\normalsize\normalfont#1}}%
\makeatother

\IEEEoverridecommandlockouts

\def\BibTeX{{\rm B\kern-.05em{\sc i\kern-.025em b}\kern-.08em
    T\kern-.1667em\lower.7ex\hbox{E}\kern-.125emX}}

\begin{document}
\bstctlcite{IEEEauthorControl}
\title{An Embodied Multiagent Framework Based on Token Communications for Cooperative ISAC}



\author{Jiahe Guo,~\IEEEmembership{Graduate Student Member,~IEEE},
Jun Du,~\IEEEmembership{Senior Member,~IEEE},\\
Chunxiao Jiang,~\IEEEmembership{Fellow,~IEEE},
and~Jintao Wang,~\IEEEmembership{Fellow,~IEEE}
\thanks{Jiahe Guo, Jun Du, and Jintao Wang are with the Department
of Electronic Engineering, Tsinghua University, Beijing 100084,
China. Jun Du and Jintao Wang are also with the State Key
Laboratory of Space Network and Communications, Tsinghua
University, Beijing 100084, China (e-mail: guojiahe1020@163.com;
jundu@tsinghua.edu.cn; wangjintao@tsinghua.edu.cn).}
\thanks{Chunxiao Jiang is with Beijing National Research Center
for Information Science and Technology, Tsinghua University,
Beijing 100084, China, and also with the State Key Laboratory
of Space Network and Communications, Tsinghua University,
Beijing 100084, China (e-mail: jchx@tsinghua.edu.cn).}
}

\maketitle

\begin{abstract}

The emerging low-altitude economy demands unmanned aerial vehicle (UAV)-enabled integrated sensing and communication (ISAC) for reliable connectivity and environmental awareness.
In particular, embodied UAV agents offer a promising means of supporting autonomous operations through a closed loop linking perception, decision-making, and physical actions. 
However, each UAV has access only to local observations, and effective cooperation requires exchanging local states and intentions. Directly sharing such information can incur substantial signaling overhead and hinder timely coordination in dynamic environments.
To deal with this problem, this paper investigates a cooperative ISAC network of embodied UAV agents and formulates a joint token communications (TokCom) and physical control problem to minimize total propulsion energy subject to communication and sensing rate requirements.
Then, we propose a state--intent TokCom (SI-TokCom) framework driven by multi-agent embodied policy learning. 
Specifically, separate pretrained codebooks enable compact exchanges of local states and intentions, while UAV agents jointly learn to select and compose tokens and determine physical actions based on local observations and received tokens.
Simulation results show that SI-TokCom achieves 98.9\% and 99.1\% of the centralized baseline's communication and sensing rates, respectively. Compared with the local baseline, it improves the corresponding rates by 5.0\% and 43.6\%, respectively, with essentially unchanged propulsion energy. 
These results highlight the potential of TokCom for communication-efficient cooperation among embodied UAV agents in ISAC systems.

\end{abstract}
 
\begin{IEEEkeywords}
Embodied intelligence, integrated sensing and communication, reinforcement learning, token communications.
\end{IEEEkeywords}

\section{Introduction}
\label{sec:introduction}




Sixth-generation (6G) wireless networks are envisioned to support emerging applications that require both high-quality communication and accurate environmental sensing~\cite{letaief2021edge, du2020machine}. With the rapid growth of wireless services, the demand for spectrum and hardware resources for communication and sensing is continuously increasing. Since radar sensing and wireless communication systems exhibit significant commonality in spectrum usage and radio frequency (RF) front-end architectures, integrating these two functionalities has become a promising approach to improve spectral and hardware efficiency. This trend has led to the emergence of integrated sensing and communication (ISAC)~\cite{Liu2022ISAC}.
Meanwhile, the emerging low-altitude economy (LAE) is driving demand for embodied unmanned aerial vehicle (UAV) agents that integrate perception, autonomous decision-making, and action in a closed loop~\cite{wu2026toward}. 
Embodied agents can enhance the operational flexibility of ISAC, motivating cooperative multi-UAV ISAC, where agents coordinate their mobility and resource allocation to serve geographically distributed users and targets~\cite{xiang2026aligning}.


However, conventional optimization methods for multi-UAV ISAC often involve high computational complexity and rely on detailed prior knowledge of the system. Specifically, joint mobility control and resource allocation require frequent exchanges of local states and intended actions~\cite{Zhu2024CommSurvey}, as UAV movement directly affects wireless channels, sensing geometry, and propulsion energy consumption~\cite{Meng2024UAVISAC,Zeng2019UAVTutorial}. These exchanges can consume substantial communication resources and introduce transmission delays. Moreover, iterative optimization algorithms may incur additional computational latency~\cite{Meng2024UAVISAC}. The resulting delays can render shared information outdated and impair coordinated decision-making~\cite{Yates2021AoISurvey}. The challenge is therefore to achieve energy-efficient mobility control and resource allocation in real time under highly dynamic and stochastic conditions.

Recent advances in artificial intelligence (AI) and the emergence of embodied agents offer new opportunities for lightweight coordination in multi-UAV ISAC systems through learning-based control and semantic communication (SemCom). Instead of transmitting raw observations or full-dimensional features, SemCom exchanges task-relevant information to reduce signaling overhead~\cite{luo2022semantic}. However, practical implementations may still rely on complex models with substantial computational and memory costs.
Recently, token communications (TokCom) have emerged as a promising approach for conveying information through discrete semantic tokens associated with shared embeddings, allowing agents to exchange compact indices instead of full-dimensional representations~\cite{ma2026semantic,qiao2025token}. For cooperative UAV-enabled ISAC, such compact exchanges can reduce signaling overhead and transmission latency while providing decision-relevant information for coordinated mobility control and resource allocation~\cite{yi2026token}. TokCom can therefore serve as a communication enabler for embodied UAV agents, supporting timely cooperation within the perception-action loop~\cite{yang2026embodied,niu2024comprehensive}.


\subsection{Related Work}
\label{subsec:related_work}


\subsubsection{Cooperative ISAC}

Existing studies have investigated cooperative ISAC among multiple base stations (BSs)~\cite{wei2024integrated,yang2025cooperative} and extended this research to UAV-enabled networks. Iterative optimization approaches have been widely adopted for cooperative ISAC~\cite{xiu2024improving,Pan2024CooperativeISAC}. In particular, the authors in~\cite{Zhou2026JRC} developed a centralized alternating optimization framework for UAV deployment and resource allocation. Xiu et al.~\cite{xiu2025movable} proposed an iterative optimization algorithm based on block coordinate descent (BCD) and successive convex approximation (SCA) for ISAC.

Learning-based methods have also been explored for adaptive UAV coordination~\cite{zhang2025task, chen2026differential}. Wang et al.~\cite{wang2026multi} proposed hierarchical graph-based multi-agent reinforcement learning (MARL), combining global coordination at the BS with local decision-making by UAVs. A centralized soft actor--critic approach was developed for joint trajectory planning and resource allocation in~\cite{Qin2023DRLISAC}, while MARL was employed for distributed sensing and communication coordination in~\cite{Xie2024DistributedISAC}. Learning-based cooperation has further been extended to integrated sensing, communication, and computation~\cite{Liu2025ISCC}, including attention-assisted multi-agent proximal policy optimization (MAPPO) for cooperative trajectory and resource control~\cite{Peng2025ISCC}.

More recently, SemCom has been explored to reduce signaling overhead by conveying compact, task-relevant representations. Existing studies have investigated task-oriented information exchange in cooperative UAV networks~\cite{deng2025semantic}, including multimodal semantic transmission for relay cooperation~\cite{Hu2025SemanticUAV}. 
For UAV-assisted ISAC, Guo et al.~\cite{Guo2026PerceptionISAC} combined perception-enhanced multimodal semantic transmission with semantic-aware resource allocation.
Nevertheless, existing approaches may still incur substantial overhead from state exchange, iterative computation, or the transmission of continuous semantic representations. Efficient information exchange among UAVs under limited energy budgets remains insufficiently explored.


\subsubsection{TokCom}

Building on SemCom, TokCom uses tokens as common units for semantic representation, processing, and transmission~\cite{ma2026semantic}. An early framework~\cite{qiao2025token} exploited foundation models and cross-modal context for efficient semantic transmission. For compact token learning, Wei et al.~\cite{wei2025token} adopted an information bottleneck approach to preserve essential multimodal information. A geometric method~\cite{park2026geometric} selects informative tokens through cross-modal dependencies under latency constraints. The authors in~\cite{shin2026context} used a shared masked language model for joint token masking and detection, reducing redundant transmissions and enabling context-assisted recovery.


Research has further extended TokCom to wireless transmission and network coordination. Hierarchical TokCom~\cite{lee2026hierarchical} exploits semantic similarity to improve transmission robustness, while Ada-TokenCom~\cite{zhang2026ada} jointly adapts token compression and transmission schemes to channel conditions. Qiao et al.~\cite{qiao2025todma} proposed token-domain multiple access with context-assisted token recovery for concurrent uplink transmissions. For distributed multimodal sensing, AirTF~\cite{liu2026airtf} exploits wireless signal superposition to fuse spatially aligned tokens. At the system level, Jiang et al.~\cite{jiang2026agentic} introduced agentic TokCom to adapt communication workflows to task and resource requirements.
These studies focus on token representation and transmission efficiency. However, using TokCom to support cooperation among embodied agents remains insufficiently explored in the literature.


\subsection{Motivation and Contributions}
\label{subsec:motivation_contributions}
Despite these advances, efficient information exchange remains a practical challenge for cooperative UAV-enabled ISAC. Centralized coordination requires global state collection and control dissemination, incurring signaling overhead and delay. Although SemCom can reduce data volume, its representations and associated models still incur communication, computation, and memory costs. Recent advances in TokCom motivate its application to this setting, using compact tokens to represent UAV states and intended actions. However, the value of exchanged tokens depends on their influence on the recipients' subsequent decisions, which are shaped by local observations and physical constraints. This coupling calls for jointly optimizing token selection and composition with mobility control and resource allocation under limited communication payload and energy budgets.

To address these challenges, we propose a state--intent TokCom (SI-TokCom) framework that integrates offline semantic codebook pretraining with online token exchange and physical control. During execution, embodied UAV agents coordinate their decisions using local observations and received tokens, supported by analytical beamforming updates. The main contributions are summarized as follows.

\begin{itemize}

    \item We formulate a cooperative multi-UAV ISAC problem that couples token codebook design and selection with UAV trajectories, user association, bandwidth allocation, transmit power, and beamforming to minimize total propulsion energy under communication and sensing rate requirements. We derive the position Fisher information matrix (FIM) and the corresponding squared position error bound (SPEB) for cooperative target localization. The sequential decision process is modeled as a decentralized partially observable Markov decision process (Dec-POMDP).

    \item We develop separate shared state token (S-Token) and intent token (I-Token) codebooks to represent current operating states and intended behavior, respectively. Offline pretraining combines vector quantization (VQ), semantic reconstruction, expert-action consistency, and codeword-usage regularization to retain decision-relevant information. The codebooks and semantic decoders are subsequently frozen, preserving shared semantic mappings while token selection and composition adapt through online policy learning.

    \item We design an online alternating optimization algorithm using MAPPO for token selection and composition followed by mixed discrete and continuous physical decisions, with closed-form beamforming updates. Counterfactual token masking assigns token-specific credit by evaluating changes in the team reward. Closed-form beamforming updates are derived by applying the Karush--Kuhn--Tucker (KKT) conditions to the subproblems reformulated via fractional programming (FP). This decomposition removes complex-valued beamforming matrices from the learned action space.
    
    \item We perform numerical simulations to evaluate the performance of the proposed SI-TokCom framework. In the evaluated setting, SI-TokCom uses a coordination payload of 96 bits per UAV per time slot and achieves 98.9\% and 99.1\% of the centralized baseline's communication and sensing rates, respectively. It also improves the communication and sensing rates by 5.0\% and 43.6\%, respectively, over the baseline with essentially unchanged propulsion energy. In addition, we evaluate the SPEB to characterize localization performance. Token ablations, trajectory examples, and codebook-usage comparisons further illustrate the learned coordination behavior.

\end{itemize}
\subsection{Organization and Notation}

The remainder of this paper is organized as follows. Section~II presents the system model and problem formulation. Section~III develops the proposed SI-TokCom framework, including offline codebook pretraining and online alternating optimization. Section~IV presents the simulation results, and Section~V concludes the paper.

\textit{Notation:} $(\cdot)^{T}$, $(\cdot)^H$, and $(\cdot)^{-1}$ denote transpose, conjugate transpose, and inverse, respectively. $|\cdot|$ denotes the absolute value or modulus of a scalar. $[\mathbf A]_{i,j}$ denotes the $(i,j)$th entry of matrix $\mathbf A$. $\mathbf A^\star$ denotes an optimal solution for the matrix variable $\mathbf A$. $\|\cdot\|_2$ and $\|\cdot\|_F$ denote the Euclidean and Frobenius norms, respectively. $\mathbb{E}\{\cdot\}$ denotes expectation. $\operatorname{Im}\{\cdot\}$ denotes the imaginary part of a complex scalar or matrix. $\operatorname{tr}(\cdot)$ denotes the matrix trace, and $\operatorname{diag}(\cdot)$ constructs a diagonal matrix from its arguments. $\mathbf I_n$ denotes the $n\times n$ identity matrix, and $[x]^+\triangleq\max\{x,0\}$. $\operatorname{sg}[\cdot]$ denotes the stop-gradient operator. $D_{\rm KL}(\cdot\|\cdot)$ denotes the Kullback--Leibler (KL) divergence between two probability distributions.

\section{System Model and Problem Formulation}

As illustrated in Fig.~\ref{fig:system_model}, we consider a cooperative multi-UAV ISAC system comprising \(K\) UAVs, \(N_c\) ground communication users (CUs), and \(N_s\) ground sensing targets (STs) over \(N_T\) time slots. The UAVs cooperatively provide data services to the CUs and perform detailed sensing of the STs based on coarse prior estimates of their positions~\cite{khalili2024efficient}.
Their index sets are denoted by \(\mathcal{K}\triangleq\{1,\ldots,K\}\), \(\mathcal{C}\triangleq\{1,\ldots,N_c\}\), \(\mathcal{S}\triangleq\{1,\ldots,N_s\}\), and \(\mathcal{T}\triangleq\{1,\ldots,N_T\}\), respectively. Each UAV is equipped with an \(M\)-element transmit antenna array and a single receive antenna, while each CU has a single receive antenna. The position and velocity of UAV \(k\) in time slot \(t\) are denoted by \(\mathbf{q}_k[t]\triangleq[x_k[t],y_k[t],H]^{T}\in\mathbb{R}^{3}\) and \(\mathbf{v}_k[t]\in\mathbb{R}^{3}\), respectively, with \(\mathbf{q}_k[t+1]=\mathbf{q}_k[t]+T\mathbf{v}_k[t]\), where \(t\) is the slot index and \(T\) is the duration of each slot. All UAVs fly at the same constant altitude \(H\). The fixed ground positions of CU \(i\) and ST \(j\) are denoted by \(\mathbf{p}_i^{c}\triangleq[x_i^{c},y_i^{c},0]^{T}\) and \(\mathbf{p}_j^{s}\triangleq[x_j^{s},y_j^{s},0]^{T}\), respectively.

\begin{figure}
    \centering
    \includegraphics[
        width=\linewidth,
        trim={0mm 4cm 0mm 0mm},
        clip
    ]{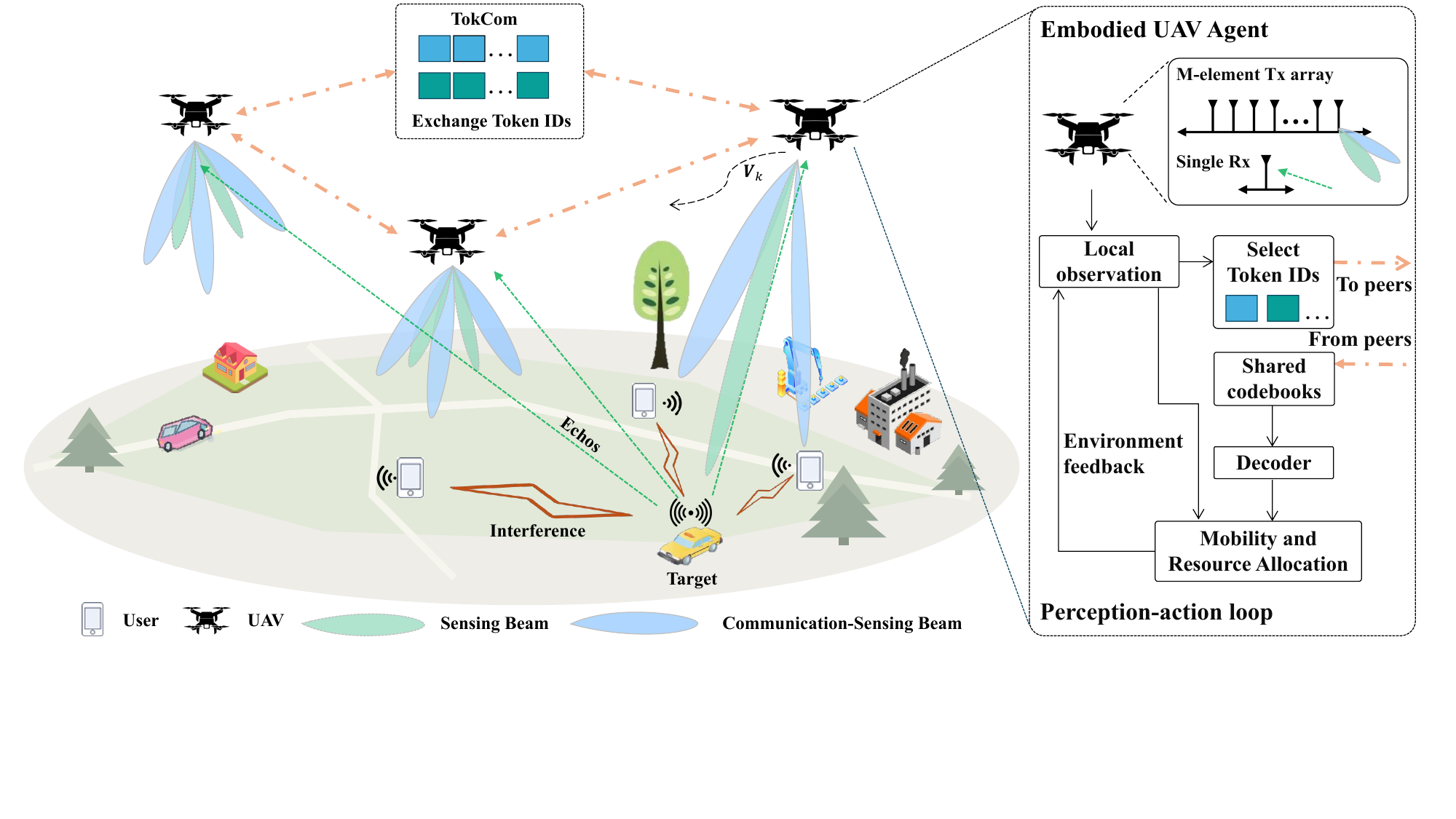}
    \caption{Cooperative multi-UAV ISAC system.}
    \label{fig:system_model}
\end{figure}

The slot-level cooperative protocol is illustrated in Fig.~\ref{fig:protocol}. At the beginning of each time slot, the UAVs exchange compact semantic tokens over air-to-air links.  Based on their local observations and received tokens, the UAVs determine their trajectory and resource allocation decisions in a distributed manner, and subsequently provide downlink communication services to the CUs while cooperatively sensing the STs.
\begin{figure}
    \centering
    \includegraphics[
        width=\linewidth,
        trim={0mm 0cm 0mm 2.5cm},
        clip
    ]{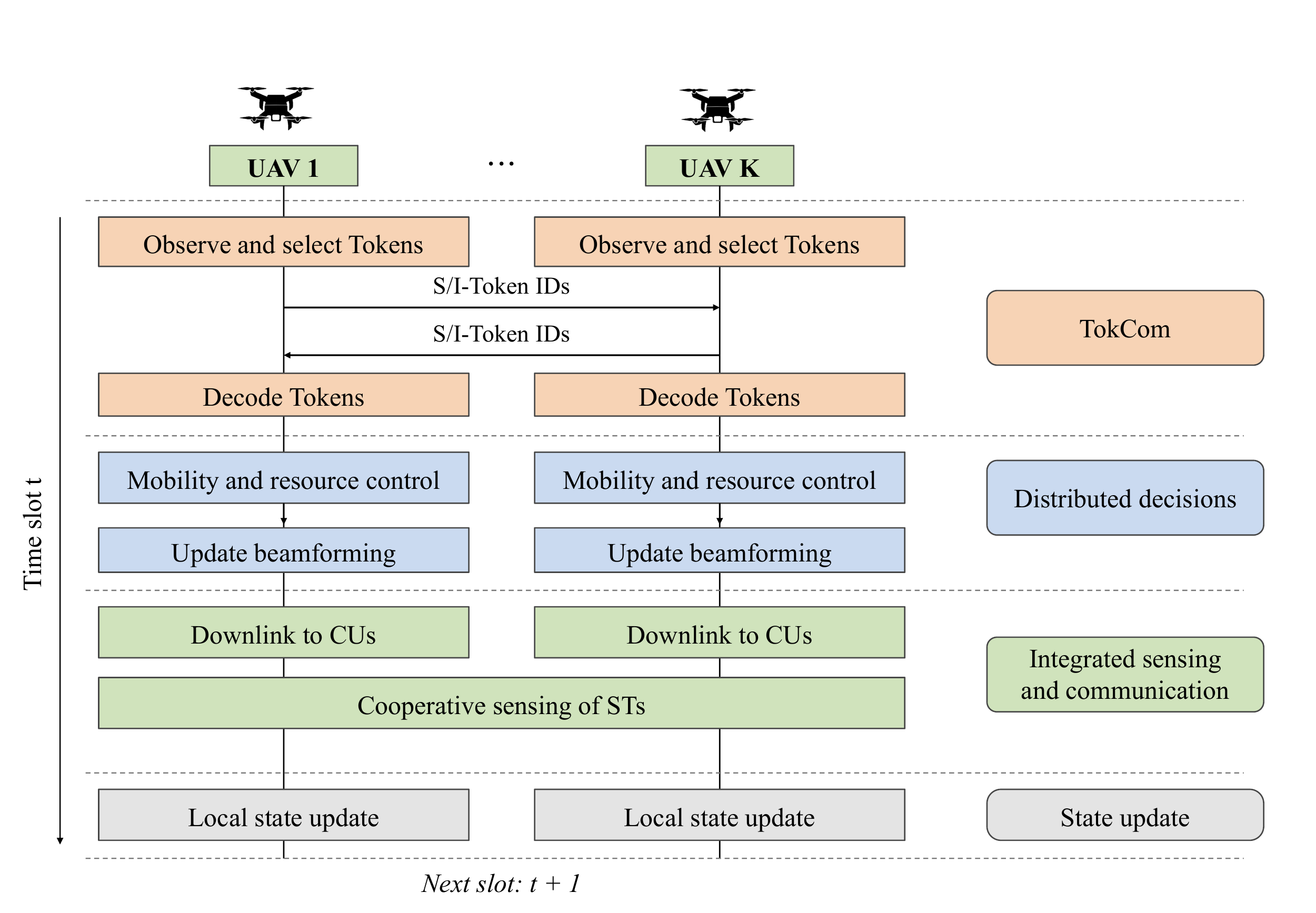}
    \caption{Slot-level cooperative protocol for the multi-UAV ISAC system.}
    \label{fig:protocol}
\end{figure}

\subsection{Channel Model}

Given the dominant line-of-sight (LoS) propagation in UAV-assisted networks, we adopt a far-field LoS channel model. The distances from UAV $k$ to CU $i$ and ST $j$ in time slot $t$ are denoted by $d_{k,i}^{c}[t]=\lVert\mathbf q_k[t]-\mathbf p_i^{c}\rVert_2$ and $d_{k,j}^{s}[t]=\lVert\mathbf q_k[t]-\mathbf p_j^{s}\rVert_2$, respectively. For the $M$-element uniform linear array, the steering vector toward angle $\theta$ is given by~\cite{guo2025flexible}
\begin{equation}
\mathbf a(\theta)=\left[1,e^{\jmath\frac{2\pi d_a}{\lambda}\cos\theta},\ldots,e^{\jmath\frac{2\pi d_a}{\lambda}(M-1)\cos\theta}\right]^{T},
\end{equation}
where $d_a$ and $\lambda$ denote the antenna spacing and carrier wavelength, respectively.
Accordingly, the channel from UAV $k$ to CU $i$ is modeled as
\begin{equation}
\mathbf h_{k,i}^{c}[t]
=\sqrt{\beta_0\left(d_{k,i}^{c}[t]\right)^{-\alpha_{\rm c}}}
\mathbf{a}_{k,i}^{c}  \left(\theta_{k,i}^{c}[t]\right),
\end{equation}
where $\beta_0$ is the reference channel power gain and $\alpha_{\rm c}$ is the path-loss exponent. For monostatic sensing, the round-trip channel between UAV $k$ and ST $j$ is expressed as
\begin{equation}
\mathbf g_{k,j}^{s}[t]
=\rho_{k,j}^{s}[t]  \left( \mathbf{a}_{k,j}^{s}\left(\theta_{k,j}^{s}[t]\right) \right)^{H}  ,
\end{equation}
where $\rho_{k,j}^{s}[t] = \sqrt{\beta_{\rm s}\left(d_{k,j}^{s}[t]\right)^{-4}}$
denotes the monostatic sensing channel amplitude coefficient, and
$\beta_{\rm s}$ is the effective sensing reflection gain.

\subsection{Communication and Sensing Signal Model}
To characterize the association between UAVs and CUs, we define the binary variable
$\alpha_{k,i}[t]\in\{0,1\}$, where $\alpha_{k,i}[t]=1$ indicates that CU $i$
is served by UAV $k$ at time slot $t$, and $\alpha_{k,i}[t]=0$ otherwise.
Each CU is associated with one UAV, satisfying $\sum_{k\in\mathcal{K}}\alpha_{k,i}[t]=1, \forall i\in\mathcal{C}$.
We adopt frequency-division multiple access for downlink transmission, where UAV $k$ allocates an orthogonal
subchannel with bandwidth $b_{k,i}[t]$ to each associated CU $i$. On the
subchannel assigned to CU $i$, UAV $k$ transmits one communication stream and
$N_s$ dedicated sensing streams. The corresponding stream vector is defined as $\mathbf{s}_{k,i}[t] = \left[s_{k,i}[t], s_{k,i,s_1}[t], \ldots, s_{k,i,s_{N_s}}[t] \right]^{T} \in \mathbb{C}^{(N_s+1)\times1}$,
where $s_{k,i}[t]$ denotes the information symbol intended for CU $i$, while
$s_{k,i,s_j}[t]$ denotes the dedicated sensing symbol transmitted by UAV $k$
over subchannel $i$ for ST $j$.
All streams are assumed to be mutually independent complex Gaussian random
variables with zero mean and unit variance, satisfying $\mathbb{E}\{ \mathbf{s}_{k,i}[t] {\mathbf{s}_{k,i}[t]}^{H} \} = \mathbf{I}$. The corresponding transmit beamforming matrix is expressed as
\begin{equation}
    \begin{aligned}
    \mathbf{W}_{k,i}[t]
    =
    &[
    \underbrace{
    \mathbf{w}_{k,i}[t]
    }_{\substack{\text{for communication and sensing}}}, \\
    &\underbrace{
    \mathbf{w}_{k,i,s_1}[t],
    \ldots,
    \mathbf{w}_{k,i,s_{N_s}}[t]
    }_{\text{dedicated for sensing}}
    ]
    \in\mathbb{C}^{M\times(N_s+1)}.    
    \end{aligned}
\end{equation}
Thus, the received signal at CU $i$ served by UAV $k$ is
\begin{equation}
\begin{aligned}
y_{k,i}^{c}[t]
&=
{\mathbf{h}_{k,i}^{c}[t]}^{H}
\mathbf{w}_{k,i}[t]s_{k,i}[t] \\
&+
\sum_{r=1}^{N_s}
{\mathbf{h}_{k,i}^{c}[t]}^{H}
\mathbf{w}_{k,i,s_r}[t]s_{k,i,s_r}[t]
+
n_{k,i}^{c}[t],
\end{aligned}
\end{equation}
where $n_{k,i}^{c}[t]\sim
\mathcal{CN}(0,N_0b_{k,i}[t])$ denotes the additive white Gaussian noise, and $N_0$ is its power spectral density.
Accordingly, the achievable communication rate over the link between UAV $k$
and CU $i$ is expressed in terms of the signal-to-interference-plus-noise ratio (SINR) as
\begin{equation}
R_{k,i}^{c}[t]
=
b_{k,i}[t]
\log_2\left(
1+\mathrm{SINR}_{k,i}[t]
\right),
\end{equation}
where
\begin{equation}
\mathrm{SINR}_{k,i}[t]
=
\frac{
\left|
{\mathbf{h}_{k,i}^{c}[t]}^{H}
\mathbf{w}_{k,i}[t]
\right|^2
}{
\displaystyle
\sum_{r=1}^{N_s}
\left|
{\mathbf{h}_{k,i}^{c}[t]}^{H}
\mathbf{w}_{k,i,s_r}[t]
\right|^2
+
N_0b_{k,i}[t]
}.
\end{equation}
The communication rate of CU $i$ is $R_i^{c}[t]=\sum_{k=1}^{K}\alpha_{k,i}[t]R_{k,i}^{c}[t]$.
Meanwhile, both the communication and dedicated sensing streams are exploited for monostatic sensing. We assume that the echoes reflected by different STs have resolvable round-trip delays and can therefore be separated through delay-domain matched filtering. Hence, no inter-target interference remains after signal processing. For an active subchannel assigned to CU $i$, the processed echo associated with ST $j$ at UAV $k$ is given by
\begin{equation}
y_{k,j,i}^{s}[t]
=
\mathbf g_{k,j}^{s}[t]
\mathbf{W}_{k,i}[t]\mathbf{s}_{k,i}[t]
+
n_{k,j,i}^{s}[t],
\end{equation}
where $n_{k,j,i}^{s}[t] \sim \mathcal{CN}(0,N_0b_{k,i}[t])$ denotes the sensing noise. Echoes from different STs can be separated in the delay domain owing to their resolvable propagation-delay differences, thereby suppressing inter-target interference~\cite{jing2024isac}. Accordingly, the sensing signal-to-noise ratio (SNR) is given by
\begin{equation}
\gamma_{k,j,i}^{s}[t]
=
\frac{
\alpha_{k,i}[t] G_p
\left\|
\rho_{k,j}^{s}[t]
{\mathbf{a}_{k,j}^{s}[t]}^{H}
\mathbf{W}_{k,i}[t]
\right\|_2^2
}{
N_0b_{k,i}[t]
},
\end{equation}
where $G_p$ is the processing gain of the matched filter.
Accordingly, the aggregate sensing rate contributed by subchannel $i$ of UAV
$k$ is given by~\cite{lyu2025movable}
\begin{equation}
R_{k,i}^{s}[t]
=
\sum_{j=1}^{N_s}
b_{k,i}[t]
\log_2\left(
1+\gamma_{k,j,i}^{s}[t]
\right).
\end{equation}
The contribution for ST $j$ is denoted by $R_{k,j,i}^{s}[t]=b_{k,i}[t]\log_2(1+\gamma_{k,j,i}^{s}[t])$, so that $R_{k,i}^{s}[t]=\sum_{j=1}^{N_s}R_{k,j,i}^{s}[t]$. The overall sensing rate is then expressed as $R^{s}[t] = \sum_{k=1}^{K} \sum_{i=1}^{N_c} \alpha_{k,i}[t]R_{k,i}^{s}[t]$.

\subsection{TokCom for Inter-UAV Coordination}
To alleviate the partial observability of decentralized UAV control under a fixed communication budget, each UAV broadcasts a compact semantic message to the other UAVs. The message consists of two complementary token types: the S-Token, which summarizes the current local state, and the I-Token, which conveys the locally intended behavior. The set of token types is denoted by $\mathcal{G}=\{\mathrm{S},\mathrm{I}\}$.
For each token type $g\in\mathcal{G}$, all UAVs share an offline-pretrained semantic codebook
\begin{equation}
\mathcal{E}_g
=
\left\{
\mathbf{e}_{g,n}\in\mathbb{R}^{D_g^{\rm tok}}
\right\}_{n=1}^{N_g^{\rm tok}},
\end{equation}
where $\mathbf{e}_{g,n}$ denotes the $n$th embedding in $\mathcal{E}_g$, while $N_g^{\rm tok}$ and $D_g^{\rm tok}$ denote the corresponding codebook size and embedding dimension, respectively. During online policy learning, the codebooks are frozen, and UAV $k$ selects the token index in each slot according to
\begin{equation}
m_{k,g,l}[t]
\sim
\pi_{\theta_{k,g}}^{\rm tok}
\left(
\cdot\mid\mathbf{o}_k[t]
\right),
\quad
m_{k,g,l}[t]\in\{1,\ldots,N_g^{\rm tok}\},
\end{equation}
where $\pi_{\theta_{k,g}}^{\rm tok}$ denotes the token selection and composition policy of UAV $k$ for type $g$, parameterized by $\theta_{k,g}$, and $\mathbf{o}_k[t]$ is the local observation of UAV $k$ at time slot $t$. The variable $m_{k,g,l}[t]$ denotes the token index selected for the $l$th message slot of type $g$. These selected indices form the type-$g$ token sequence generated by UAV $k$, given by
\begin{equation}
\mathbf{m}_{k,g}[t] = \left[ m_{k,g,1}[t], \ldots, m_{k,g,L_g}[t] \right],
g\in\mathcal{G},
\end{equation}
where $L_g$ denotes the number of token slots for type $g$. The complete token message is formed by concatenating the S-Token and I-Token sequences as $\mathbf{m}_k[t] = \left[ \mathbf{m}_{k,\mathrm{S}}[t], \mathbf{m}_{k,\mathrm{I}}[t] \right]$. The slot count and the semantic role of each slot remain fixed throughout online training and execution, while different token combinations across these slots convey different message semantics.
The semantic codebooks and decoders are pretrained offline and subsequently frozen to preserve a shared semantic basis. Token selection and composition are jointly optimized with the control policy, allowing context-dependent inter-UAV coordination to emerge through collaborative policy learning.

\subsection{Problem Formulation}


UAV energy consumption mainly consists of the energy required for inter-UAV token transmission and mobility. Owing to the small token payload, the corresponding transmission energy is typically several orders of magnitude lower than the propulsion energy. Therefore, the token transmission energy is neglected, and only the UAV propulsion energy is considered. The propulsion power is modeled as a function of the UAV flight
speed, given by
\begin{equation}
P_k^{\rm fly}[t]
=
P_0
+
\frac{1}{2}
\rho_{\rm air} A C_D
\left\|
\mathbf{v}_k[t]
\right\|_2^3
+
P_i
\sqrt{
1+
\frac{
\left\|
\mathbf{v}_k[t]
\right\|_2^4
}{
4v_0^4
}
},
\end{equation}
where $P_0$ and $P_i$ denote the blade-profile power and induced power,
respectively, $\rho_{\rm air}$ is the air density, $A$ is the rotor-disc area,
$C_D$ is the drag coefficient, and $v_0$ is the mean rotor-induced velocity.
Accordingly, the propulsion energy consumed by UAV $k$ during time slot $t$ is $E_k^{\rm fly}[t] = P_k^{\rm fly}[t]T$.

The total propulsion energy consumed by all UAVs over the considered time
horizon is therefore given by $E^{\rm fly} = \sum_{t=1}^{N_T} \sum_{k=1}^{K} E_k^{\rm fly}[t]$.
We jointly optimize the UAV velocities
$\mathbf{V}=\{\mathbf{v}_k[t]\}$, UAV--CU associations $\mathbf{A}=\{\alpha_{k,i}[t]\}$, bandwidth allocation $\mathbf{B}=\{b_{k,i}[t]\}$, link power allocation $\mathbf{P}=\{p_{k,i}[t]\}$, and transmit beamforming matrices $\mathbf{W}=\{\mathbf{W}_{k,i}[t]\}$. Let $\mathbf{M} \triangleq \{ m_{k,g,l}[t] \mid k\in\mathcal{K}, g\in\mathcal{G}, l=1,\ldots,L_g\}$
denote the collection of token indices selected by all UAVs, and let $\mathcal{E}\triangleq\{\mathcal{E}_g\}_{g\in\mathcal{G}}$ denote the collection of codebooks.
The energy-minimization problem is formulated as
\begin{subequations}
\label{problem:P1}
\begin{align}
\mathcal{P}_1 \quad
&\underset{\mathcal{E},\mathbf{M},\mathbf{V},\mathbf{A},
\mathbf{B},\mathbf{P},\mathbf{W}}{\operatorname{min}}
\quad \sum_{t=1}^{N_T}\sum_{k=1}^{K}E_k^{\rm fly}[t]
\label{problem:P1_obj}
\\
\operatorname{s.t.}\quad
& R_i^{c}[t] \geq R_{\min,i}^{c}, \forall i,t,
\label{problem:P1_c1}
\\
& R^{s}[t] \geq R_{\min}^{s}, \forall t,
\label{problem:P1_c2}
\\
& \sum_{k=1}^{K}\alpha_{k,i}[t]=1,
\alpha_{k,i}[t]\in\{0,1\}, \forall k,i,t,
\label{problem:P1_c3}
\\
& \sum_{i=1}^{N_c}b_{k,i}[t]\leq B_k, \forall k,t,
\label{problem:P1_c4}
\\
& \sum_{i=1}^{N_c}p_{k,i}[t]\leq P_k^{\max}, \forall k,t,
\label{problem:P1_c6}
\\
& \left\|\mathbf{W}_{k,i}[t]\right\|_F^2
\leq p_{k,i}[t], \forall k,i,t,
\label{problem:P1_c7}
\\
& \left\|\mathbf{v}_k[t]\right\|_2
\leq V_k^{\max}, \forall k,t,
\label{problem:P1_c9}
\\
& \left\|\mathbf{q}_k[t]-\mathbf{q}_{\ell}[t]\right\|_2
\geq D_{\min}, \forall k\neq\ell,t,
\label{problem:P1_c11}
\end{align}
\end{subequations}
where $R_{\min,i}^{c}$ is the minimum required communication rate of CU $i$, and $R_{\min}^{s}$ is the minimum required aggregate sensing rate. $B_k$, $P_k^{\max}$, and $V_k^{\max}$ specify the bandwidth, transmit power, and speed limits of UAV $k$, respectively, while $D_{\min}$ denotes the minimum inter-UAV separation. Constraints \eqref{problem:P1_c1} and \eqref{problem:P1_c2} impose the communication and sensing requirements, respectively. Constraint \eqref{problem:P1_c3} associates each CU with exactly one UAV in each time slot. Constraints \eqref{problem:P1_c4} and \eqref{problem:P1_c6} enforce the bandwidth and power budgets of each UAV, while \eqref{problem:P1_c7} bounds each link's beamforming power by its allocated power. Finally, \eqref{problem:P1_c9} limits UAV speed, and \eqref{problem:P1_c11} maintains the prescribed inter-UAV separation.
Problem $\mathcal{P}_1$ involves discrete association and token decisions coupled with UAV mobility and resource allocation. Moreover, decentralized execution requires each UAV to coordinate its decisions using local observations and limited information exchange.

\section{Multi-Agent Embodied Policy-Driven SI-TokCom Framework}

\subsection{Overview and Dec-POMDP Formulation}

To address the coupled optimization problem $\mathcal{P}_1$, we propose an SI-TokCom framework driven by multi-agent embodied policy learning. Treating each UAV as an embodied agent, we formulate the sequential decision process as a Dec-POMDP $\left\langle \mathcal{K},s_t,\{\widetilde{\mathbf{o}}_k\}_{k\in\mathcal{K}},\{\mathbf{u}_k\}_{k\in\mathcal{K}},\{r_k\}_{k\in\mathcal{K}},\gamma \right\rangle$, where $s_t\in\mathcal X$ is the global state at time slot $t$, $\widetilde{\mathbf{o}}_k$, $\mathbf{u}_k$, and $r_k$ denote the state, action, and reward of UAV $k \in \mathcal{K}$, respectively, and $\gamma$ is the discount factor. The detailed formulations are as follows.



\subsubsection{State Space}
At time slot $t$, the state of UAV $k$ can be expressed as
$\widetilde{\mathbf{o}}_k[t] = \left[\mathbf{o}_k[t],\mathbf{c}_k[t]\right]$,
where $\mathbf{c}_k[t]$ is the aggregated semantic representation constructed from the received token messages, given by 
\begin{equation}
\mathbf{c}_k[t] = \mathop{\mathrm{Concat}}_{\ell\in\mathcal{K}\setminus\{k\}} \left( \mathcal{D}_{\mathrm{S}} \left( \mathbf{m}_{\ell,\mathrm{S}}[t] \right), \mathcal{D}_{\mathrm{I}} \left( \mathbf{m}_{\ell,\mathrm{I}}[t] \right) \right),
\end{equation}
where $\mathcal{D}_{\mathrm{S}}(\cdot)$ and $\mathcal{D}_{\mathrm{I}}(\cdot)$ denote the frozen semantic decoders. Moreover, $\mathbf{o}_k[t]$ can be expressed as
\begin{equation}
\mathbf{o}_k[t] = \left[
\mathbf{o}_k^{\rm self}[t],
\left\{\mathbf{o}_{k,i}^{c}[t]\right\}_{i\in\mathcal{C}},
\left\{\mathbf{o}_{k,j}^{s}[t]\right\}_{j\in\mathcal{S}}
\right]
\in\mathcal{O}_k,
\label{eq:local_observation}
\end{equation}
where $\mathbf{o}_k^{\rm self}[t]=[\mathbf{q}_k[t],\mathbf{v}_k[t],E_k^{\rm r}[t],\boldsymbol{\alpha}_k[t-1],\mathbf{b}_k[t-1],
\allowbreak
\mathbf{p}_k[t-1]]$. Here, $E_k^{\rm r}[t]$ denotes the remaining onboard energy of UAV $k$, $\boldsymbol{\alpha}_k[t-1]\triangleq\left[\alpha_{k,i}[t-1]\right]_{i\in\mathcal{C}}$, $\mathbf{b}_k[t-1]\triangleq\left[b_{k,i}[t-1]\right]_{i\in\mathcal{C}}$, and $\mathbf{p}_k[t-1]\triangleq\left[p_{k,i}[t-1]\right]_{i\in\mathcal{C}}$. 
Moreover, $\mathbf{o}_{k,i}^{c}[t]=\mathbf{p}_i^{c}-\mathbf{q}_k[t]$ and $\mathbf{o}_{k,j}^{s}[t]=\mathbf{p}_j^{s}-\mathbf{q}_k[t]$ characterize the relative positions of CU $i$ and ST $j$, respectively, where $\hat{\mathbf{p}}_j^{s}[t]$ denotes the estimated position of ST $j$.



\subsubsection{Action Space}
The action of UAV $k$ is defined as $\mathbf{u}_k[t] = \{
\{ m_{k,g,l}[t] \}_{\forall g,\forall l}, \mathbf{v}_k[t], \{ \alpha_{k,i}[t], b_{k,i}[t], p_{k,i}[t], \mathbf{W}_{k,i}[t] \}_{i\in\mathcal{C}}
\}$. The beamforming matrices are included in the joint action but updated via closed-form solutions, avoiding the difficulty of learning complex-valued outputs.

\subsubsection{Surrogate Reward}

For policy learning, we adopt a penalty-relaxed surrogate of $\mathcal{P}_1$ that penalizes communication and sensing rate shortfalls to encourage constraint satisfaction. The surrogate reward of UAV $k$ is defined as
\begin{equation}
\begin{aligned}
r_k[t]
=&
w_{\rm c}\sum_{i=1}^{N_c}R_i^{c}[t]
+
w_{\rm s}R^{s}[t]
-
w_{\rm E}E^{\rm fly}[t]
-
\mathcal{P}_k[t]
\\
&+
\eta
\left(
c_k[t]
-
\frac{1}{K}\sum_{\ell=1}^{K}c_\ell[t]
\right),
\end{aligned}
\label{eq:reward_function}
\end{equation}
where $E^{\rm fly}[t]\triangleq\sum_{\ell=1}^{K}E_\ell^{\rm fly}[t]$. The penalty term $\mathcal{P}_k[t]$ is given by $\mathcal{P}_k[t] = \frac{w_{\rm c}^{\rm P}}{N_c}
\sum_{i=1}^{N_c} \left[ 1-\frac{R_i^{c}[t]}{R_{\min,i}^{c}}
\right]^+ 
+ w_{\rm s}^{\rm P}
\left[
1-\frac{R^{s}[t]}{R_{\min}^{s}}
\right]^+
+ \mathcal{P}_k^{\rm phy}[t]$,
where $\mathcal{P}_k^{\rm phy}[t]$ penalizes the mobility, safety-distance, bandwidth, power, and energy constraint violations associated with UAV $k$. The term $c_k[t]$ measures its local communication and sensing contributions, flight energy, and service load. The local term improves credit assignment without changing the average team reward.

Building on the above Dec-POMDP formulation, the proposed SI-TokCom framework comprises the following two stages, as shown in Fig.~\ref{fig.algrithom_framework}:

\begin{itemize}
\item \emph{Offline codebook pretraining:} The S-Token and I-Token codebooks, together with their semantic encoders and decoders, are pretrained using an expert dataset and frozen to preserve consistent semantic mappings.

\item \emph{Online joint optimization:}
To avoid directly learning complex-valued beamforming matrices,
$\mathcal{P}_1$ is decomposed into two variable blocks.
For fixed $\mathbf{W}$, the token-interactive policies optimize
token selection and determine
the physical actions $\mathbf{V}$, $\mathbf{A}$, $\mathbf{B}$,
and $\mathbf{P}$.
With these physical actions fixed, $\mathbf{W}$ is optimized via FP and KKT-based closed-form updates. The updated $\mathbf{W}$ is then fed back to the policy learning block for the next iteration.
\end{itemize}

\begin{figure*}[t]
    \centering
    \includegraphics[width=\textwidth]{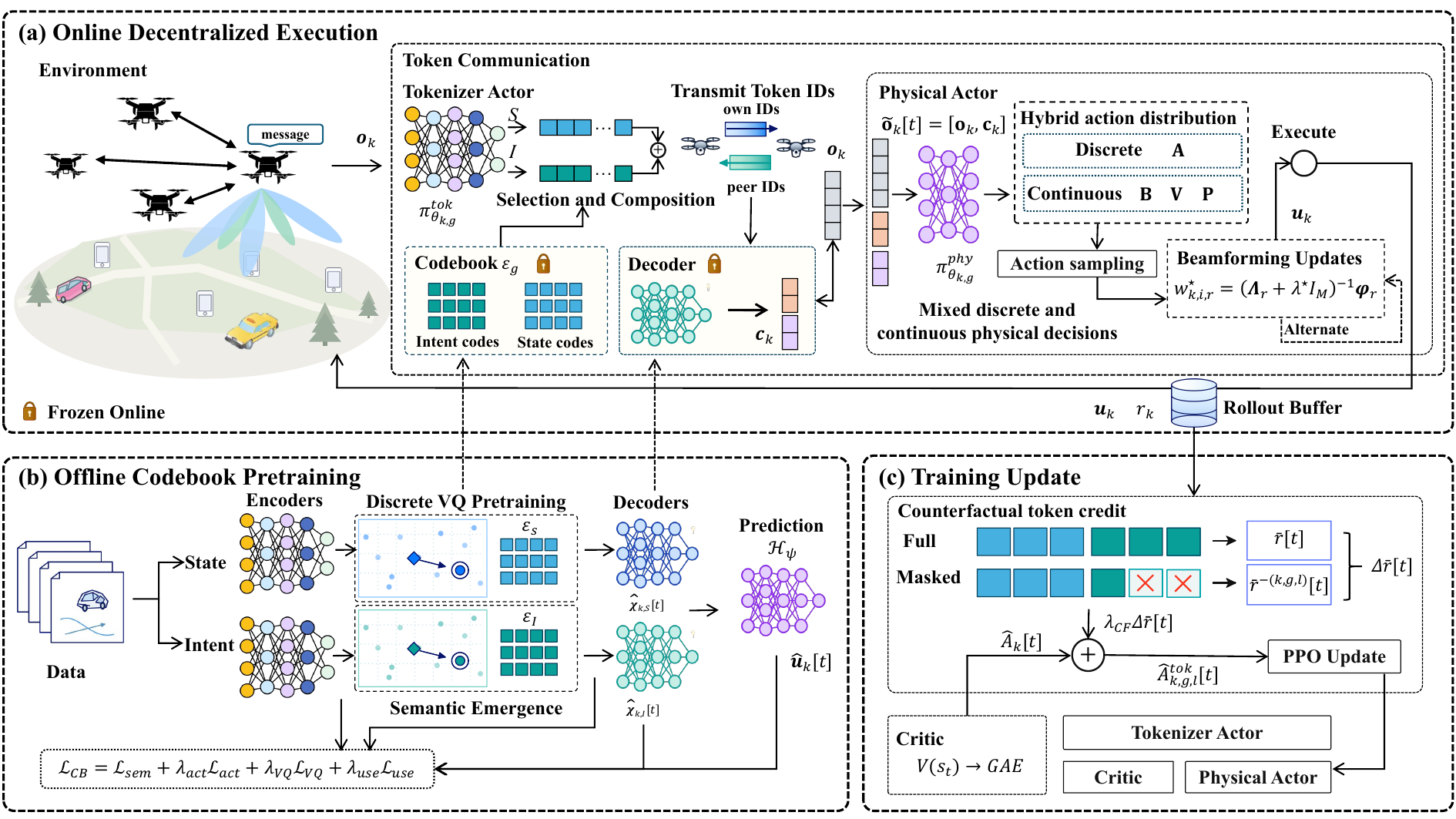}
    \caption{Overview of the proposed framework with offline codebook pretraining and online joint optimization.}
    \label{fig.algrithom_framework}
\end{figure*}

\subsection{Offline S-Token and I-Token Codebook Pretraining}
\label{subsec:offline_codebook}


As illustrated in Fig.~\ref{fig.algrithom_framework}, the S-Token and I-Token codebooks are pretrained using the expert dataset $\mathcal{D}_{\rm E}$. The dataset contains paired samples obtained by evaluating pretrained centralized and local policies from the same environment states. Half of the rollout episodes follow the centralized policy, and the other half follow the local policy. 
 
For each token type $g\in\mathcal{G}$, the encoder produces a continuous representation $\mathbf{z}_{k,g,l}[t]$. VQ maps this representation to its nearest codeword, whose index is given by

\begin{equation}
m_{k,g,l}[t]
=
\underset{n\in\{1,\ldots,N_g^{\rm tok}\}}
{\operatorname{arg\,min}}
\left\|
\mathbf{z}_{k,g,l}[t]
-
\mathbf{e}_{g,n}
\right\|_2^2.
\label{eq:vq_assignment}
\end{equation}
The quantized representation is $\overline{\mathbf{z}}_{k,g,l}[t]=\mathbf{e}_{g,m_{k,g,l}[t]}$. The VQ loss is defined as

\begin{equation}
\begin{aligned}
\mathcal{L}_{\rm VQ}
=
\mathbb{E}_{\mathcal{D}_{\rm E}}
\Bigg[
\sum_{g\in\mathcal{G}}
\sum_{l=1}^{L_g}
&
\left\|
\operatorname{sg}
\left[
\mathbf{z}_{k,g,l}[t]
\right]
-
\overline{\mathbf{z}}_{k,g,l}[t]
\right\|_2^2\\
+
&
\beta_g
\left\|
\mathbf{z}_{k,g,l}[t]
-
\operatorname{sg}
\left[
\overline{\mathbf{z}}_{k,g,l}[t]
\right]
\right\|_2^2
\Bigg],
\end{aligned}
\label{eq:vq_loss}
\end{equation}
where $\beta_g$ is the commitment coefficient.

Let $\widehat{\boldsymbol{\chi}}_{k,g}[t]=\mathcal{D}_g(\mathbf{m}_{k,g}[t])$ denote the decoded semantic representation and $\boldsymbol{\chi}_{k,g}[t]$ denote the corresponding task-relevant target obtained from $\mathcal{D}_{\rm E}$. The semantic reconstruction loss is given by
\begin{equation}
\mathcal{L}_{\rm sem}
=
\mathbb{E}_{\mathcal{D}_{\rm E}}
\left[
\sum_{g\in\mathcal{G}}
\left\|
\widehat{\boldsymbol{\chi}}_{k,g}[t]
-
\boldsymbol{\chi}_{k,g}[t]
\right\|_2^2
\right].
\label{eq:semantic_loss}
\end{equation}

To preserve decision-relevant information, an auxiliary predictor
$\mathcal{H}_{\psi}$ with parameters $\psi$ reconstructs the expert
action from the decoded state and intent semantics.
The action-consistency loss is defined as
\begin{equation}
\mathcal{L}_{\rm act}
=
\mathbb{E}_{\mathcal{D}_{\rm E}}
\left[
\ell_{\rm act}
\left(
\mathcal{H}_{\psi}
\left(
\left[
\widehat{\boldsymbol{\chi}}_{k,\mathrm{S}}^{T}[t],
\widehat{\boldsymbol{\chi}}_{k,\mathrm{I}}^{T}[t]
\right]^{T}
\right),
\mathbf{a}_k^{\star}[t]
\right)
\right],
\label{eq:action_loss}
\end{equation}
where $\mathbf{a}_k^{\star}[t]$ denotes the expert physical action
provided in $\mathcal{D}_{\rm E}$, and $\ell_{\rm act}(\cdot)$
combines classification and regression losses for discrete
and continuous control decisions, respectively.

To prevent codebook collapse, the codeword-usage distribution is regularized as
\begin{equation}
\mathcal{L}_{\rm use}
=
\sum_{g\in\mathcal{G}}
D_{\rm KL}
\left(
\overline{\mathbf{p}}_g
\middle\|
\mathbf{u}_g
\right),
\label{eq:codebook_usage_loss}
\end{equation}
where $\overline{\mathbf{p}}_g$ is the empirical codeword-usage distribution and $\mathbf{u}_g$ is the corresponding uniform distribution. 
The overall offline training loss is
\begin{equation}
\mathcal{L}_{\rm CB}
=
\mathcal{L}_{\rm sem}
+
\lambda_{\rm act}\mathcal{L}_{\rm act}
+
\lambda_{\rm VQ}\mathcal{L}_{\rm VQ}
+
\lambda_{\rm use}\mathcal{L}_{\rm use},
\label{eq:codebook_training_loss}
\end{equation}
where $\lambda_{\rm act}$, $\lambda_{\rm VQ}$, and $\lambda_{\rm use}$ are weighting coefficients. This joint training encourages task-relevant semantics to emerge in the codebooks. After pretraining, the codebooks and semantic decoders are frozen to preserve the token-index-to-semantics mapping during online policy optimization.

\subsection{Online Joint Optimization}
\label{subsec:online_optimization}

With the pretrained codebooks and semantic decoders fixed, the online stage alternates between two optimization blocks. The beamforming block updates $\mathbf W$ for given $\mathbf V$, $\mathbf A$, $\mathbf B$, and $\mathbf P$ using closed-form updates. The policy block learns token selection and composition while jointly optimizing the mixed discrete and continuous physical decisions through MAPPO.

\subsubsection{Beamforming Optimization Based on Closed-Form Updates}
\label{subsec:beamforming_optimization}

Since the propulsion energy is independent of $\mathbf W$, the beamforming subproblem maximizes a dynamically weighted sensing and communication utility:
\begin{subequations}
\label{problem:SP1}
\begin{align}
\mathcal P_2:\quad
\max_{\mathbf W_{k,i}[t]}\quad &
\omega_1\sum_{j=1}^{N_s}R_{k,j,i}^{s}[t]
+\omega_2R_{k,i}^{c}[t]
\label{problem:SP1_obj}\\
\operatorname{s.t.}\quad &
\left\|\mathbf W_{k,i}[t]\right\|_F^2
\leq p_{k,i}[t].
\label{problem:SP1_c1}
\end{align}
\end{subequations}
Here, the base weights $\omega_1,\omega_2\geq0$ satisfy $\omega_1+\omega_2=1$ and control the relative priorities of sensing and communication, respectively. To address the non-convexity of $\mathcal P_2$, we apply the Lagrangian dual transform and quadratic transform under the FP framework. The resulting beamforming update is characterized below.

\begin{theorem}
\label{thm:beamforming_update}
For fixed weights and FP auxiliary variables, and $p_{k,i}[t]>0$, an optimal solution to the resulting quadratic subproblem is
\begin{equation}
\mathbf w_{k,i,r}^{\star}
=
\left(
\boldsymbol{\Lambda}_r+\lambda^{\star}\mathbf I_M
\right)^{-1}
\boldsymbol{\varphi}_r,
\quad r=1,\ldots,N_s+1,
\label{eq:beamforming_closed_form}
\end{equation}
where $\boldsymbol{\varphi}_r$ and $\boldsymbol{\Lambda}_r$ are defined in \eqref{eq:FP_phi} and \eqref{eq:FP_Lambda}, respectively, and $\lambda^{\star}\geq0$ is the optimal Lagrange multiplier associated with the power constraint. 
\end{theorem}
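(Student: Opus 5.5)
The plan is to first write out the FP reformulation explicitly, so that the quadratic subproblem referenced in Theorem~\ref{thm:beamforming_update} has a concrete form, and then solve that subproblem through its KKT conditions.

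\emph{Lagrangian dual transform.} Apply it to each log-ratio term of the objective of $\mathcal P_2$. This introduces auxiliary variables $\tau_c$ (for $\mathrm{SINR}_{k,i}$) and $\tau_j$ (for $\gamma^s_{k,j,i}$). The logarithms become linear terms plus ratios of the form $\mathrm{num}/(\mathrm{num}+\mathrm{den})$.

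\emph{Quadratic transform.} Apply it to each such ratio. Let $\mathbf h=\mathbf h^c_{k,i}$ and $\mathbf a_j=\rho^s_{k,j}\mathbf a^s_{k,j}$. The communication term contributes $2\operatorname{Re}\{y_c^{*}\mathbf h^H\mathbf w_{k,i}\}-|y_c|^2\big(\sum_{r}|\mathbf h^H\mathbf w_{k,i,r}|^2+N_0b_{k,i}\big)$, where the sum runs over all $N_s+1$ columns. Each sensing term contributes $2\operatorname{Re}\{\mathbf y_j^H\mathbf W^H\mathbf a_j\}-\|\mathbf y_j\|^2\big(\|\mathbf a_j^H\mathbf W\|^2+N_0b_{k,i}/(\alpha_{k,i}G_p)\big)$, with a vector auxiliary $\mathbf y_j\in\mathbb C^{N_s+1}$ because the sensing numerator is a squared norm of a row vector.

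For fixed $(\tau,y)$, the resulting objective separates column by column. It has the form $\sum_r\big(2\operatorname{Re}\{\boldsymbol\varphi_r^H\mathbf w_{k,i,r}\}-\mathbf w_{k,i,r}^H\boldsymbol\Lambda_r\mathbf w_{k,i,r}\big)+\text{const}$. Here $\boldsymbol\Lambda_r$ is a nonnegative combination of the rank-one matrices $\mathbf h\mathbf h^H$ (weighted by $\omega_2(1+\tau_c)|y_c|^2$) and $\mathbf a_j\mathbf a_j^H$ (weighted by $\omega_1(1+\tau_j)\|\mathbf y_j\|^2$), so $\boldsymbol\Lambda_r\succeq0$. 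The vector $\boldsymbol\varphi_r$ collects the linear coefficients: $\mathbf h$ appears only for $r=1$, and the $r$th entry of $\mathbf y_j$ times $\mathbf a_j$ appears for every $r$. These are the definitions I would label \eqref{eq:FP_phi} and \eqref{eq:FP_Lambda}. In the $\boldsymbol\Lambda_r$ term the cross-stream interference penalty $\mathbf h\mathbf h^H$ appears for all $r$, and this must be tracked carefully. I would also drop the $\log_2$ and $b_{k,i}$ constants into the weights.

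\emph{Convexity and Slater.} The subproblem is now the maximization of a concave quadratic (since $\boldsymbol\Lambda_r\succeq0$) over the convex ball $\sum_r\|\mathbf w_{k,i,r}\|^2\le p_{k,i}$. Because $p_{k,i}>0$, the point $\mathbf W=\mathbf 0$ is strictly feasible, so Slater's condition holds. Strong duality therefore holds, and the KKT conditions are necessary and sufficient.

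\emph{KKT conditions.} The Lagrangian is $\sum_r(\cdots)-\lambda(\sum_r\|\mathbf w_r\|^2-p_{k,i})$. Setting the Wirtinger gradient with respect to $\mathbf w_r^{*}$ to zero gives $(\boldsymbol\Lambda_r+\lambda\mathbf I_M)\mathbf w_r=\boldsymbol\varphi_r$. The remaining conditions are complementary slackness $\lambda(\sum_r\|\mathbf w_r\|^2-p_{k,i})=0$ and $\lambda\ge0$.

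\emph{Existence and choice of $\lambda^\star$.} This is the main obstacle: one must guarantee that $\boldsymbol\Lambda_r+\lambda^\star\mathbf I_M$ is invertible and that $\lambda^\star$ exists. Define $f(\lambda)=\sum_r\|(\boldsymbol\Lambda_r+\lambda\mathbf I)^{-1}\boldsymbol\varphi_r\|^2$. Using the eigendecomposition of each $\boldsymbol\Lambda_r$, $f$ is continuous and strictly decreasing on $(0,\infty)$ and tends to $0$ as $\lambda\to\infty$. Two cases follow.
\begin{itemize}
\item If $\lambda=0$ admits a feasible stationary point, that is, if $\boldsymbol\Lambda_r$ is invertible (or $\boldsymbol\varphi_r$ lies in its range, in which case the pseudo-inverse is used) and $f(0)\le p_{k,i}$, then $\lambda^\star=0$.
\item Otherwise, either $f(0^+)>p_{k,i}$ or $f$ blows up as $\lambda\to0^+$ along a null direction of $\boldsymbol\Lambda_r$ that is not orthogonal to $\boldsymbol\varphi_r$. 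In both sub-cases the intermediate value theorem yields a unique $\lambda^\star>0$ with $f(\lambda^\star)=p_{k,i}$, found by bisection.
\end{itemize}
In every case where $\lambda^\star>0$, $\boldsymbol\Lambda_r+\lambda^\star\mathbf I\succ0$ is invertible, so \eqref{eq:beamforming_closed_form} is well defined.

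\emph{Global optimality.} Since the KKT conditions are sufficient for this convex problem, the resulting point is globally optimal for the quadratic subproblem, which is the claim of Theorem~\ref{thm:beamforming_update}. Optimality is only for this subproblem at fixed FP variables; the FP alternation itself merely gives monotone improvement of $\mathcal P_2$.
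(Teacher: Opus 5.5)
Your proposal is correct and follows essentially the same route as the paper's appendix proof: the FP transforms yield a column-separable concave quadratic with $\boldsymbol{\Lambda}_r\succeq\mathbf 0$, Slater's condition follows from $p_{k,i}[t]>0$, and stationarity gives $(\boldsymbol{\Lambda}_r+\lambda\mathbf I_M)\mathbf w_{k,i,r}=\boldsymbol{\varphi}_r$, with $\lambda^{\star}$ then fixed by complementary slackness through a monotone power function solved by bisection. Your handling of singular $\boldsymbol{\Lambda}_r$ is more careful than the paper's, which just writes $\boldsymbol{\Lambda}_r^{\dagger}$ in $h(0)$; under the paper's definitions, however, each $\boldsymbol{\varphi}_r$ lies in the range of $\boldsymbol{\Lambda}_r$, so your blow-up sub-case never actually arises.
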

\noindent\emph{Proof:} See Appendix~\ref{app:fp_beamforming} for more details.
\hfill$\square$\par

\begin{figure*}[!t]
\begin{equation}  
\boldsymbol{\varphi}_r[t] = \begin{cases}  
\sum_{j=1}^{N_s} \omega_1 \alpha_{k,i}[t] b_{k,i}[t]  
\sqrt{1+\mu_j^{s}} \sqrt{G_p}  
\left(\rho_{k,j}^{s}\right)^*  
\left(\xi_{j,r}^{s}\right)^*  
\mathbf{a}_{k,j}^{s}  
+  
\omega_2 b_{k,i}[t]  
\sqrt{1+\mu^{c}}  
\left(\xi^{c}\right)^*  
\mathbf{h}_{k,i}^{c},  
 r=1, \\  
\sum_{j=1}^{N_s}  
\omega_1 \alpha_{k,i}[t] b_{k,i}[t]  
\sqrt{1+\mu_j^{s}}  
\sqrt{G_p}  
\left(\rho_{k,j}^{s}\right)^*  
\left(\xi_{j,r}^{s}\right)^*  
\mathbf{a}_{k,j}^{s},  
 r=2,\ldots,N_s+1  
\end{cases}.
\label{eq:FP_phi}  
\end{equation}

\begin{equation} 
\begin{aligned} 
\boldsymbol{\Lambda}_r[t] =  
& \sum_{j=1}^{N_s} \omega_1 b_{k,i}[t] \left\| \boldsymbol{\xi}_j^{s} \right\|_2^2 
\alpha_{k,i}[t]G_p \left| \rho_{k,j}^{s} \right|^2 \mathbf{a}_{k,j}^{s} 
\left( \mathbf{a}_{k,j}^{s} \right)^{H} + \omega_2 b_{k,i}[t] \left| \xi^{c} \right|^2 \mathbf{h}_{k,i}^{c} \left( \mathbf{h}_{k,i}^{c} \right)^{H}. 
\end{aligned} 
\label{eq:FP_Lambda} 
\end{equation}
\hrulefill
\end{figure*}

To account for communication and sensing quality-of-service (QoS) violations, the effective weights at outer iteration $\ell$ are updated as
\begin{equation}
\omega_1^{{\rm eff},(\ell)}
=
\omega_1+\eta_{\rm s}^{{\rm BF},(\ell)},
\qquad
\omega_2^{{\rm eff},(\ell)}
=
\omega_2+\eta_{\rm c}^{{\rm BF},(\ell)},
\label{omega_1_2update}
\end{equation}
where
$\eta_{\rm s}^{{\rm BF},(\ell+1)}
=
\min\{
\eta_{{\rm s},\max}^{\rm BF},
\eta_{\rm s}^{{\rm BF},(\ell)}
+\tau_{\rm s}
[1-R^{s,(\ell)}/R_{\min}^{s}]^+
\}$
and
$\eta_{\rm c}^{{\rm BF},(\ell+1)}
=
\min\{
\eta_{{\rm c},\max}^{\rm BF},
\eta_{\rm c}^{{\rm BF},(\ell)}
+\tau_{\rm c}
[1-\min_{i\in\mathcal C}\{R_i^{c,(\ell)}/R_{\min,i}^{c}\}]^+
\}$.
An unsatisfied QoS requirement therefore receives greater priority in the subsequent beamforming update. The outer iteration terminates when both requirements are satisfied or the maximum number of iterations is reached. The optimization based on closed-form updates is summarized in Algorithm~\ref{alg:FP_beamforming}.

\begin{algorithm}[t]
\caption{Beamforming Optimization Based on Closed-Form Updates}
\label{alg:FP_beamforming}
\begin{algorithmic}[1]
\STATE \textbf{Input:} Fixed $\mathbf V$, $\mathbf A$, $\mathbf B$, and $\mathbf P$; channel information
\STATE Initialize feasible $\mathbf W$ and $\eta_{\rm s}^{\rm BF}=\eta_{\rm c}^{\rm BF}=0$
\FOR{$\ell=0$ to $L_{\rm out}-1$}
    \STATE Update $\omega_1^{{\rm eff},(\ell)}$ and $\omega_2^{{\rm eff},(\ell)}$ using \eqref{omega_1_2update}
    \FOR{$n=0$ to $L_{\rm in}-1$}
        \STATE Update the FP auxiliary variables $\boldsymbol{\mu}$ and $\boldsymbol{\xi}$
        \STATE Calculate $\boldsymbol{\varphi}_r$ and $\boldsymbol{\Lambda}_r$
        \STATE Set $\lambda^{\star}=0$ if $h(0)\leq0$; otherwise, obtain $\lambda^{\star}$ by bisection
        \STATE Update $\mathbf w_{k,i,r}^{\star}$ using \eqref{eq:beamforming_closed_form}
        \IF{the inner FP-KKT iteration converges}
            \STATE \textbf{break}
        \ENDIF
    \ENDFOR
    \STATE Calculate $R^{s}$ and $\{R_i^{c}\}_{i\in\mathcal C}$
    \IF{$R^{s}\geq R_{\min}^{s}$ and $R_i^{c}\geq R_{\min,i}^{c}$ for all $i\in\mathcal C$}
        \STATE \textbf{break}
    \ENDIF
    \STATE Update $\eta_{\rm s}^{\rm BF}$ and $\eta_{\rm c}^{\rm BF}$ from the sensing and communication QoS violations, respectively.
\ENDFOR
\STATE \textbf{Output:} Optimized beamforming matrices $\mathbf W^{\star}$
\end{algorithmic}
\end{algorithm}

\subsubsection{Token Selection and Policy Optimization}
\label{subsec:token_policy_optimization}

The token-selection policy is optimized using proximal policy optimization (PPO).
For the $l$th token slot of type $g$, the token-policy
probability ratio is defined as
\begin{equation}
\varrho_{k,g,l}^{\rm tok}[t]
=
\frac{
\pi_{\theta_{k,g}}^{\rm tok}
\left(m_{k,g,l}[t]\mid\mathbf{o}_k[t]\right)
}{
\pi_{\theta_{k,g,\rm old}}^{\rm tok}
\left(m_{k,g,l}[t]\mid\mathbf{o}_k[t]\right)
},
\label{eq:token_probability_ratio}
\end{equation}
where $\theta_{k,g}$ denotes the current token-policy
parameters of UAV $k$ for type $g$, and $\theta_{k,g,\rm old}$ denotes
the parameters used to collect the rollout data.

To assign credit to individual tokens, we augment the estimated
UAV advantage with a counterfactual reward difference:
\begin{equation}
\widehat{A}_{k,g,l}^{\rm tok}[t]
=
\widehat{A}_k[t]
+
\lambda_{\rm CF}
\left(
\overline{r}[t]
-
\overline{r}^{(-k,g,l)}[t]
\right),
\label{eq:token_advantage}
\end{equation}
where $\widehat{A}_k[t]$ is the estimated advantage of UAV $k$,
$\overline{r}[t]\triangleq K^{-1}\sum_{\ell=1}^{K}r_\ell[t]$
is the average team reward, and $\lambda_{\rm CF}$ weights
the counterfactual contribution.
The reward $\overline{r}^{(-k,g,l)}[t]$ is evaluated by masking
the selected token in the received messages, keeping the
environment state and all other tokens unchanged, and
recomputing the downstream physical actions and beamforming.

The token-selection policy is trained by minimizing
the per-slot clipped surrogate loss
\begin{equation}
\begin{aligned}
&\mathcal L_{\rm tok}(\boldsymbol{\theta})
= -\frac{1}{K} \sum_{k=1}^{K} \mathbb E_t
\Biggl[
\sum_{g\in\mathcal G}
\sum_{l=1}^{L_g}
\min\Bigl\{
\varrho_{k,g,l}^{\rm tok}[t]
\widehat A_{k,g,l}^{\rm tok}[t],
\\
&\quad
\operatorname{clip}
\left(
\varrho_{k,g,l}^{\rm tok}[t],
1-\epsilon,
1+\epsilon
\right)
\widehat A_{k,g,l}^{\rm tok}[t]
\Bigr\}
\Biggr]
-
\lambda_{\rm H}\mathcal H_{\rm tok},
\end{aligned}
\label{eq:token_ppo_objective}
\end{equation}
where $\boldsymbol{\theta}$ collects the token-policy parameters $\{\theta_{k,g}\}_{k\in\mathcal{K},g\in\mathcal G}$ and the physical-policy parameters, $\mathbb{E}_t$ denotes averaging over rollout samples,
$\epsilon$ is the clipping parameter, $\mathcal{H}_{\rm tok}$ is the token-policy entropy, and $\lambda_{\rm H}$ is its regularization coefficient.
The physical policy is trained by minimizing the standard clipped MAPPO loss~\cite{yu2022surprising}.

The centralized critic
$V_{\boldsymbol{\phi},k}(s_t)$,
parameterized by $\boldsymbol{\phi}$,
estimates the expected discounted return of UAV $k$
from the global state $s_t\in\mathcal{X}$.
It is trained by minimizing
\begin{equation}
\mathcal{L}_{\rm critic}(\boldsymbol{\phi})
=
\frac{1}{2K}
\sum_{k=1}^{K}
\mathbb{E}_t
\left[
\left(
V_{\boldsymbol{\phi},k}(s_t)
-
\widehat{V}_k[t]
\right)^2
\right],
\label{eq:critic_loss}
\end{equation}
where $\widehat{V}_k[t]$ denotes the return target
computed from the rollout data.
The centralized critic is used only during training.
The advantage estimates and return targets are computed using generalized advantage estimation (GAE).
The overall online training procedure is summarized in
Algorithm~\ref{alg:online_sitcom}.

%

\begin{algorithm}[t]
\caption{Online Training of the SI-TokCom Framework}
\label{alg:online_sitcom}
\begin{algorithmic}[1]
\STATE \textbf{Input:} Pretrained codebooks and semantic decoders $\{\mathcal E_g,\mathcal D_g\}_{g\in\mathcal G}$
\STATE Freeze $\{\mathcal E_g,\mathcal D_g\}$; initialize policy parameters $\boldsymbol{\theta}$ and critic parameters $\boldsymbol{\phi}$
\FOR{each training episode}
    \STATE Reset the environment and obtain $s_1$ and $\{\mathbf o_k[1]\}$; set $\boldsymbol{\theta}_{\rm old}\leftarrow\boldsymbol{\theta}$
    \FOR{$t=1$ to $N_T$}
        \STATE Each UAV selects $m_{k,g,l}[t]$
        \STATE Exchange $\{\mathbf m_k[t]\}$ and construct $\{\widetilde{\mathbf o}_k[t]\}$ using the frozen decoders
        \STATE Each UAV selects $\mathbf v_k[t]$, $\alpha_{k,i}[t]$, $b_{k,i}[t]$, and $p_{k,i}[t]$ based on $\widetilde{\mathbf o}_k[t]$
        \STATE Update $\mathbf W_{k,i}[t]$ using Algorithm~\ref{alg:FP_beamforming}
        \STATE Execute the joint action $\mathbf u[t]$; observe $\{r_k[t]\}$, $s_{t+1}$, and $\{\mathbf o_k[t+1]\}$
        \STATE Evaluate the masked-token rewards $\overline r^{(-k,g,l)}[t]$ under the same pre-action state
        \STATE Store the rollout data and policy probabilities
    \ENDFOR
    \STATE Compute $\widehat A_k[t]$ and $\widehat V_k[t]$ using GAE
    \STATE Compute $\widehat A_{k,g,l}^{\rm tok}[t]$ using \eqref{eq:token_advantage}
    \FOR{each PPO epoch}
        \STATE Update $\boldsymbol{\theta}$ using \eqref{eq:token_ppo_objective} and the physical-policy MAPPO objective
        \STATE Update $\boldsymbol{\phi}$ by minimizing \eqref{eq:critic_loss}
    \ENDFOR
\ENDFOR
\STATE \textbf{Output:} Trained token-selection and physical-policy parameters $\boldsymbol{\theta}$
\end{algorithmic}
\end{algorithm}

\section{Simulation Results}
\label{sec:simulation}

We evaluate the proposed SI-TokCom framework in a cooperative multi-UAV ISAC system operating in a $600\times600~\mathrm{m}^2$ region with $K=4$ UAVs, $N_c=8$ CUs, and $N_s=3$ STs. The initial horizontal UAV positions are $(150,150)$, $(450,150)$, $(450,450)$, and $(150,450)$~m. The CUs and STs are sampled within $[30,570]\times[30,570]~\mathrm{m}^2$, with each ST initially observable by at least one UAV. The simulations are conducted using the Monte Carlo method. The main parameters are summarized in Table~\ref{tab:simulation_parameters}.
\begin{table}[t]
\centering
\caption{Main Simulation Parameters}
\label{tab:simulation_parameters}
\footnotesize
\renewcommand{\arraystretch}{1.15}
\setlength{\tabcolsep}{2pt}

\begin{tabular*}{\columnwidth}{@{\extracolsep{\fill}}
    >{\raggedright\arraybackslash}m{0.21\columnwidth}
    >{\centering\arraybackslash}m{0.23\columnwidth}
    >{\raggedright\arraybackslash}m{0.30\columnwidth}
    >{\centering\arraybackslash}m{0.18\columnwidth}@{}}
\toprule
\textbf{Parameter} & \textbf{Value}
& \textbf{Parameter} & \textbf{Value} \\
\cmidrule(lr){1-2}\cmidrule(lr){3-4}

$K,\,N_c,\,N_s$ & $4,\,8,\,3$
& $w_{\rm c},\,w_{\rm s},\,w_{\rm E}$ & $0.6,\,0.4,\,0.8$ \\

$H$ & $80~\mathrm{m}$
& $\omega_1,\,\omega_2$ & $0.4,\,0.6$ \\

$N_T,\,T$ & $12,\,1~\mathrm{s}$
& $L_{\rm in},\,L_{\rm out}$ & $20,\,8$ \\

$V_k^{\max}$ & $15~\mathrm{m/s}$
& FP--KKT tolerance & $10^{-3}$ \\

$D_{\min}$ & $20~\mathrm{m}$
& $N_{\mathrm{S}}^{\rm tok},\,N_{\mathrm{I}}^{\rm tok}$
& $64,\,64$ \\

$M$ & $4$
& $D_{\mathrm{S}}^{\rm tok},\,D_{\mathrm{I}}^{\rm tok}$
& $16,\,16$ \\

Carrier frequency & $2.4~\mathrm{GHz}$
& $L_{\mathrm{S}},\,L_{\mathrm{I}}$ & $3,\,13$ \\

$\lambda,\,d_a$ & $0.125~\mathrm{m},\,\lambda/2$
& Intent horizon & $2$ slots \\

$B_k,\,\sum_{k=1}^{K}B_k$ & $(1.25,\,5)~\mathrm{MHz}$
& $\gamma,\,\epsilon$ & $0.95,\,0.2$ \\

$P_k^{\max}$ & $1~\mathrm{W}$
& GAE parameter & $0.95$ \\

$N_0$ & $10^{-17}~\mathrm{W/Hz}$
& Critic learning rate & $5\times10^{-4}$ \\

$\beta_0,\,\alpha_{\rm c}$ & $10^{-3},\,2$
& Mini-batches per update & $2$ \\

$\beta_{\rm s},\,G_p$ & $10^{-5},\,1$
& Offline epochs / batch size & $100\,/\,32$ \\

$R_{\min,i}^{c}$
& $\mathcal{U}[1.84,2.76]$\newline $\mathrm{Mbit/s}$
& Codebook exponential moving average coefficient & $0.99$ \\

$R_{\min}^{s}$ & $20~\mathrm{kbit/s}$
& Training episodes / parallel envs. & $1000\,/\,4$ \\

\bottomrule
\end{tabular*}
\end{table}
We evaluate the following schemes with different information inputs and inter-UAV communication mechanisms.
1)~\textbf{SI-TokCom (TokenCom):} The proposed scheme combines local observations with exchanged S-Tokens and I-Tokens.
2)~\textbf{Local:} Each UAV selects its physical actions solely from its local observation.
3)~\textbf{Centralized:} The policies use global system information, including the states of all UAVs and ground nodes.
4)~\textbf{State Only:} This ablation supplements local observations with S-Tokens only.
5)~\textbf{Intent Only:} This ablation supplements local observations with I-Tokens only.
6)~\textbf{SemanticCom:} This baseline exchanges continuous semantic representations among UAVs to support cooperative control. 

\begin{figure}[t]
    \centering
    \includegraphics[width=1.0\linewidth]{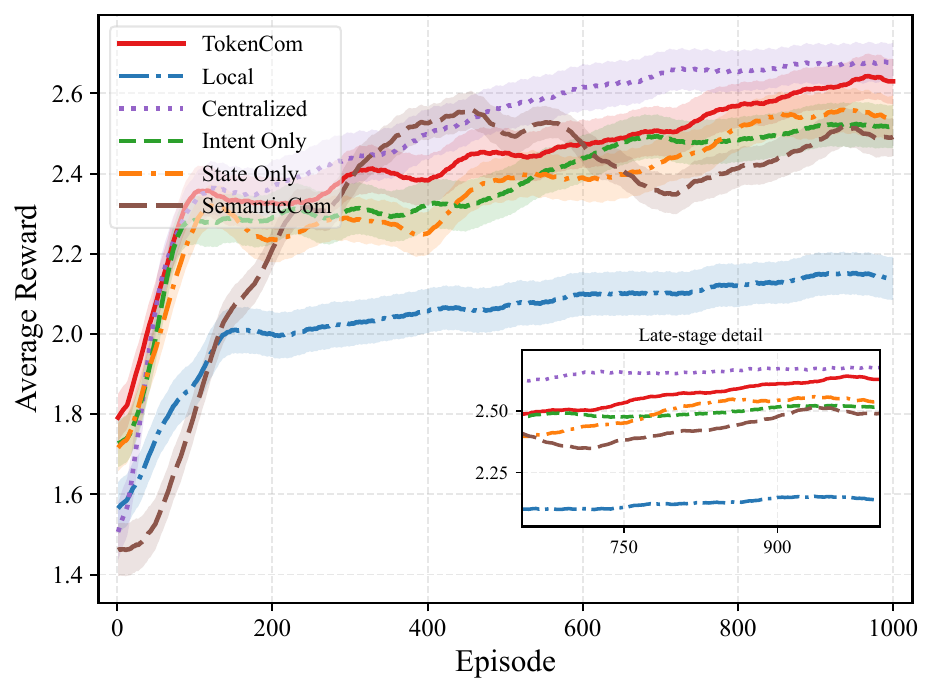}
    \caption{Average reward convergence of different schemes.}
    \label{fig:reward_convergence}
\end{figure}

Fig.~\ref{fig:reward_convergence} compares the average per-slot training reward of different schemes versus the number of episodes. SI-TokCom improves rapidly during the initial stage and attains a higher late-stage reward than Local, SemanticCom, and the two single-token variants, approaching that of Centralized. 
These results suggest that joint state and intent exchange helps alleviate partial observability and improve decentralized coordination. 
At the same message budget, SI-TokCom benefits from decision-oriented state and intent prototypes that better support cooperation than SemanticCom's representations focused on state and action reconstruction. Moreover, compared with continuous messages, discrete tokens enable grouping of information with similar action-value structures during codebook training. Such grouping may help narrow the performance gap between policies under partial and full observability.


\begin{table}[t]
    \centering
    \caption{Late-Stage Performance Comparison}
    \label{tab:performance_comparison}
    \footnotesize
    \renewcommand{\arraystretch}{1.18}
    \setlength{\tabcolsep}{1pt}
    \begin{tabular*}{\columnwidth}{@{\extracolsep{\fill}}lcccccc@{}}
        \toprule
        Scheme
        & \shortstack{Comm.\\rate $\uparrow$\\(Mbit/s)}
        & \shortstack{Sensing\\rate $\uparrow$\\(kbit/s)}
        & \shortstack{Energy $\downarrow$\\(J)}
        & \shortstack{Comm.\\QoS $\uparrow$\\(\%)}
        & \shortstack{Sensing\\QoS $\uparrow$\\(\%)}
        & \shortstack{Coord.\\payload \\(bit)} \\
        \midrule
        Centralized
        & $7.22$ & $65.04$ & $42.18$
        & $99.45$ & $93.48$ & --- \\

        TokenCom
        & $7.14$ & $64.44$ & $42.11$
        & $98.73$ & $93.89$ & $96$ \\

        Local
        & $6.80$ & $44.88$ & $42.11$
        & $99.39$ & $78.91$ & $0$ \\

        Intent Only
        & $7.06$ & $58.60$ & $42.16$
        & $98.83$ & $90.91$ & $78$ \\

        State Only
        & $6.93$ & $64.93$ & $42.23$
        & $97.96$ & $92.46$ & $18$ \\

        SemanticCom
        & $7.03$ & $60.21$ & $43.17$
        & $98.76$ & $90.26$ & $96$ \\
        \bottomrule
    \end{tabular*}

    \vspace{2pt}
    \parbox{\columnwidth}{\footnotesize
        QoS columns report satisfaction rates. Coordination payload
        is measured in bits per UAV per time slot, excluding packet
        headers and channel coding. The coordination overhead of
        Centralized is not modeled.
    }
\end{table}


Table~\ref{tab:performance_comparison} summarizes the average performance of different schemes after convergence. Communication rate is averaged over CUs, while sensing rate and propulsion energy are system totals; these metrics are averaged over time slots within each episode and across environments. With a coordination payload of 96 bits per UAV per time slot, SI-TokCom achieves 98.9\% and 99.1\% of the communication and sensing rates of Centralized, respectively. Compared with Local, it increases the communication and sensing rates by 5.0\% and 43.6\%, respectively, with comparable energy consumption. At the same payload, SI-TokCom achieves a 7.0\% higher sensing rate and 2.5\% lower energy consumption than SemanticCom. Although State Only attains a slightly higher average sensing rate, SI-TokCom achieves the highest sensing QoS satisfaction rate (93.89\%), exceeding State Only and Intent Only. This indicates that a higher average sensing rate does not necessarily imply better satisfaction of sensing requirements.

\begin{figure}[t]
    \centering
    \includegraphics[width=1.0\linewidth]{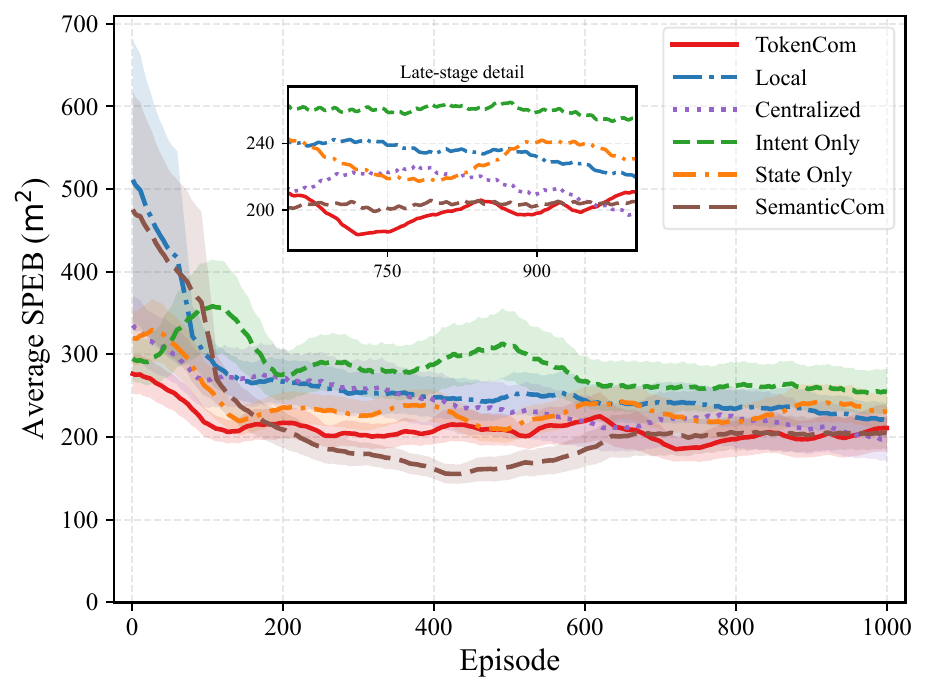}
    \caption{Average SPEB over all STs during training for different schemes.}
    \label{fig:speb_convergence}
\end{figure}


Fig.~\ref{fig:speb_convergence} shows the SPEB averaged over all STs during training. The SPEB is computed as the trace of the Cram\'er--Rao bound (CRB) matrix for planar position, as derived in Appendix~\ref{app:position_crb}. In the later stages of training, SI-TokCom achieves an SPEB comparable to Centralized and SemanticCom and generally lower than Local and the two single-token variants. These results support the benefit of joint state and intent exchange for cooperative sensing, with the lower SPEB suggesting the potential for more accurate localization.

\begin{figure*}[t]
    \centering
    \subfloat[$K=4$, $N_s=3$, $N_c=8$]{%
        \includegraphics[width=0.32\textwidth]{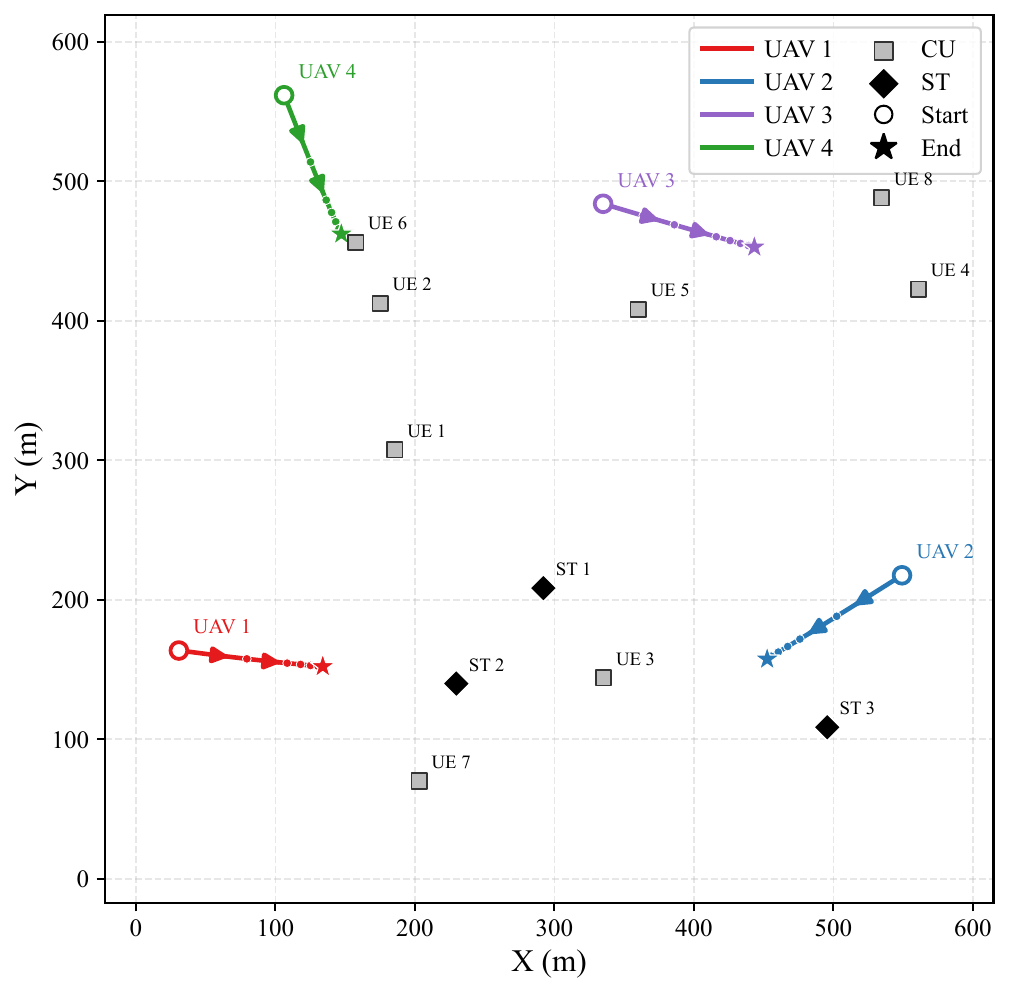}%
        \label{fig:uav_trajectory_scenario2}%
    }\hfill
    \subfloat[$K=6$, $N_s=4$, $N_c=10$]{%
        \includegraphics[width=0.32\textwidth]{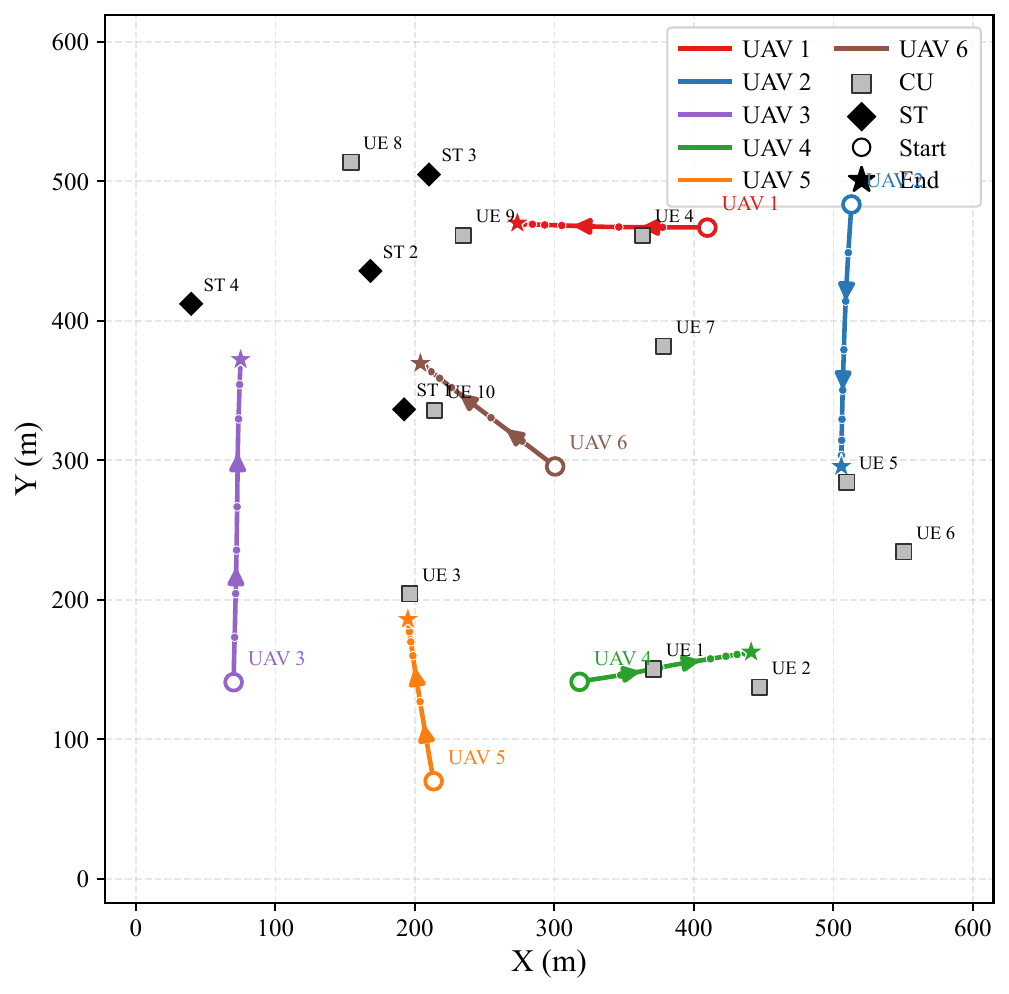}%
        \label{fig:uav_trajectory_scenario3}%
    }\hfill
    \subfloat[$K=3$, $N_s=2$, $N_c=6$]{%
        \includegraphics[width=0.32\textwidth]{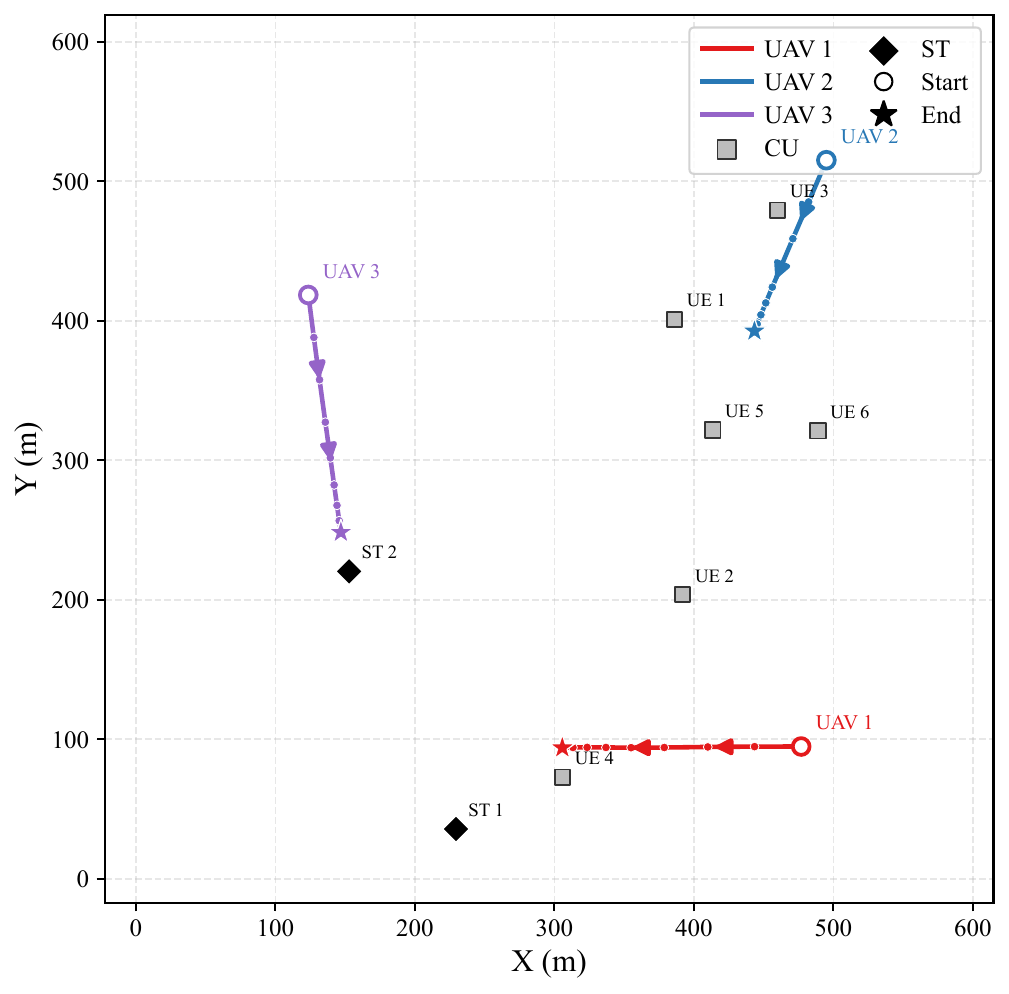}%
        \label{fig:uav_trajectory_scenario4}%
    }
    \caption{UAV trajectories generated by TokenCom under different network sizes and randomly generated CU and ST layouts.}
    \label{fig:uav_trajectories}
\end{figure*}


Fig.~\ref{fig:uav_trajectories} illustrates the UAV trajectories generated by SI-TokCom in three test scenarios with different numbers of UAVs, STs, and CUs. The ground-node locations are randomly generated and differ from those used during training. The trajectory variations provide qualitative evidence of adaptability across the evaluated layouts and network sizes. This adaptability may benefit from the complementary roles of state and intent tokens: state tokens convey local conditions, while intent tokens help UAVs anticipate their peers' planned actions. Together, these cues may support coordinated mobility as service locations and the number of UAVs change.



Fig.~\ref{fig:rate_speb_tradeoff} compares the tradeoff between the communication and sensing rates for SI-TokCom, Local, Centralized, State Only, Intent Only, and SemanticCom during online testing. 
Varying $\omega_1$ and $\omega_2$ adjusts the relative priorities of the two services and produces different operating points. Higher communication rates are generally accompanied by lower sensing rates. At comparable communication rates, SI-TokCom generally achieves higher sensing rates than Local, State Only, Intent Only, and SemanticCom, while Centralized provides the best overall tradeoff. These comparisons support the benefit of joint state and intent exchange in balancing communication and sensing performance.
\begin{figure}[t]
    \centering
    \includegraphics[
        width=\linewidth,
        trim={10mm 0mm 0mm 0mm},
        clip
    ]{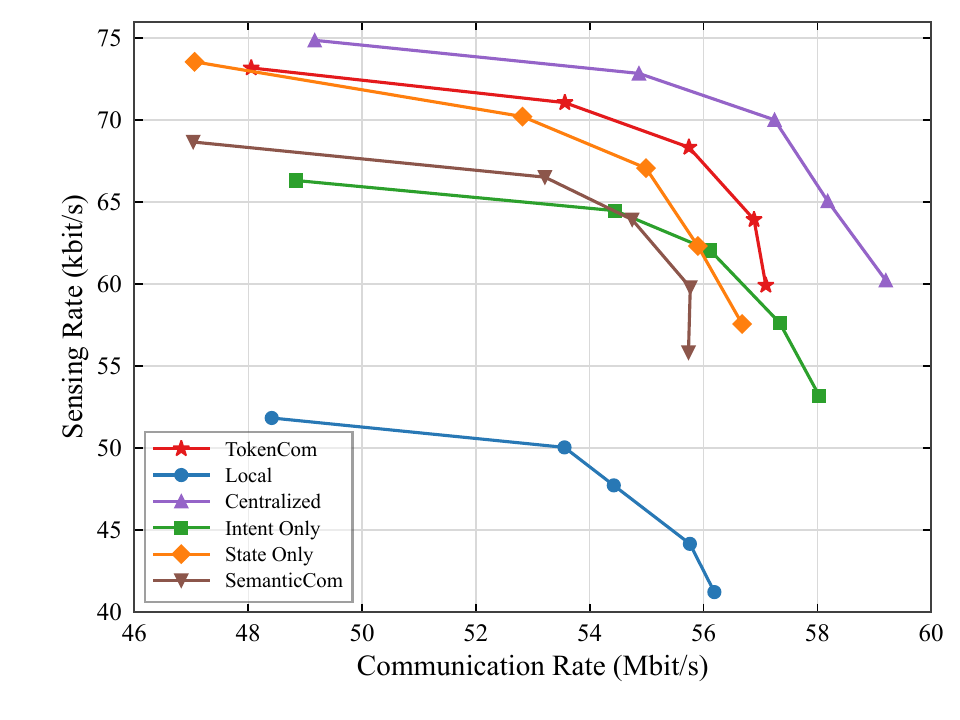}
    \caption{Tradeoff between the communication and sensing rates.}
    \label{fig:rate_speb_tradeoff}
\end{figure}



Fig.~\ref{fig:codebook_usage_difference}(a) and (b) show the S-Token and I-Token usage frequencies in scenario A, respectively, providing the reference distributions for comparison. Scenario A has higher communication and lower sensing requirements, ground nodes near the initial UAV positions, and high initial battery levels. Scenario B has lower communication and higher sensing requirements, ground nodes clustered in the southwestern region, and low initial battery levels. Fig.~\ref{fig:codebook_usage_difference}(c) and (d) present the corresponding frequency differences, computed as scenario B minus scenario A. Positive and negative values indicate increased and decreased usage in scenario B, respectively. These nonuniform shifts reveal scenario-dependent token preferences, providing qualitative evidence of adaptive codebook reuse under changing service requirements, node layouts, and energy conditions.

\begin{figure*}[t]
    \centering
    \subfloat[S-Token usage in scenario A]{%
        \includegraphics[width=0.24\textwidth]{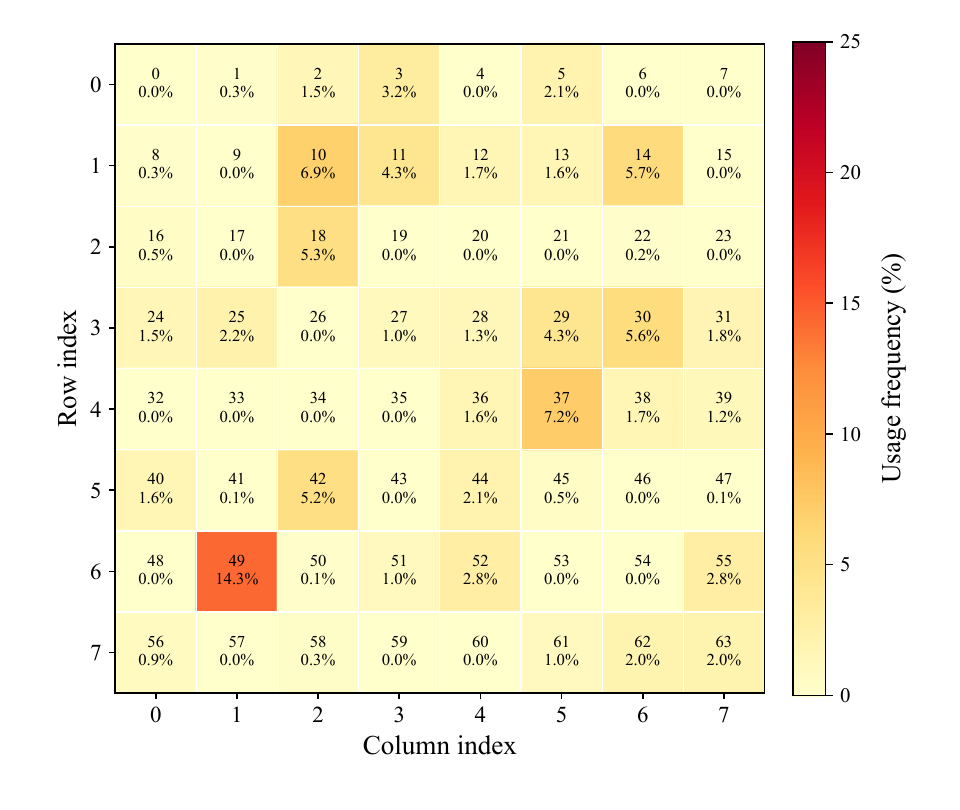}%
        \label{fig:state_codebook_usage_A}%
    }\hfill
    \subfloat[I-Token usage in scenario A]{%
        \includegraphics[width=0.24\textwidth]{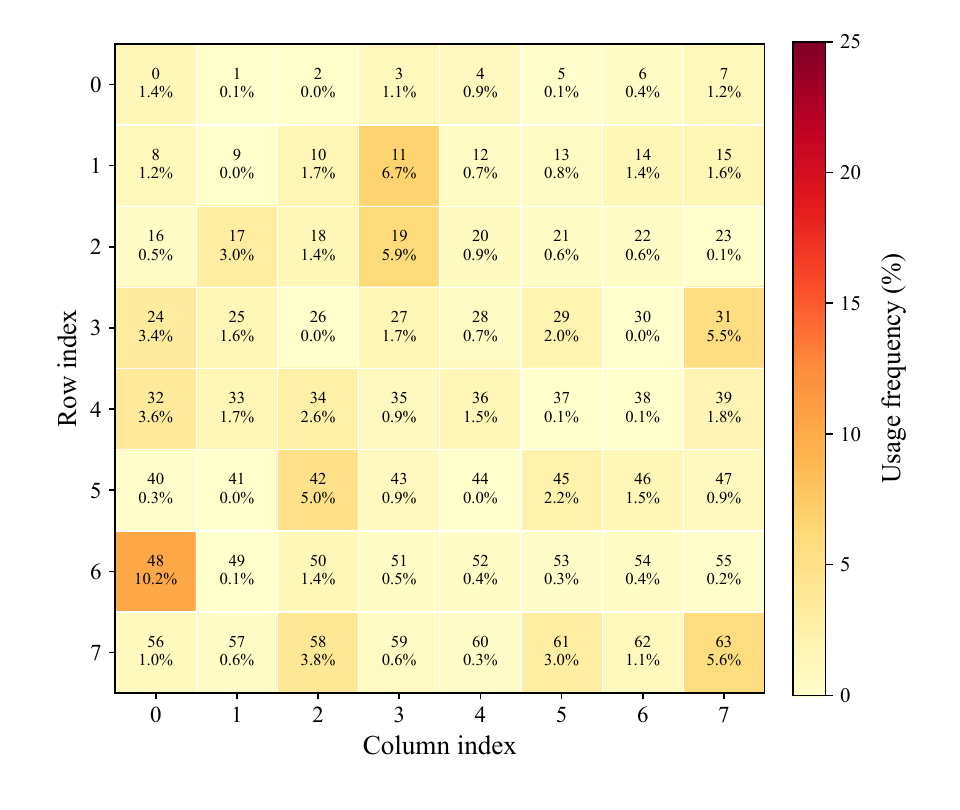}%
        \label{fig:intent_codebook_usage_A}%
    }\hfill
    \subfloat[S-Token usage difference]{%
        \includegraphics[width=0.24\textwidth]{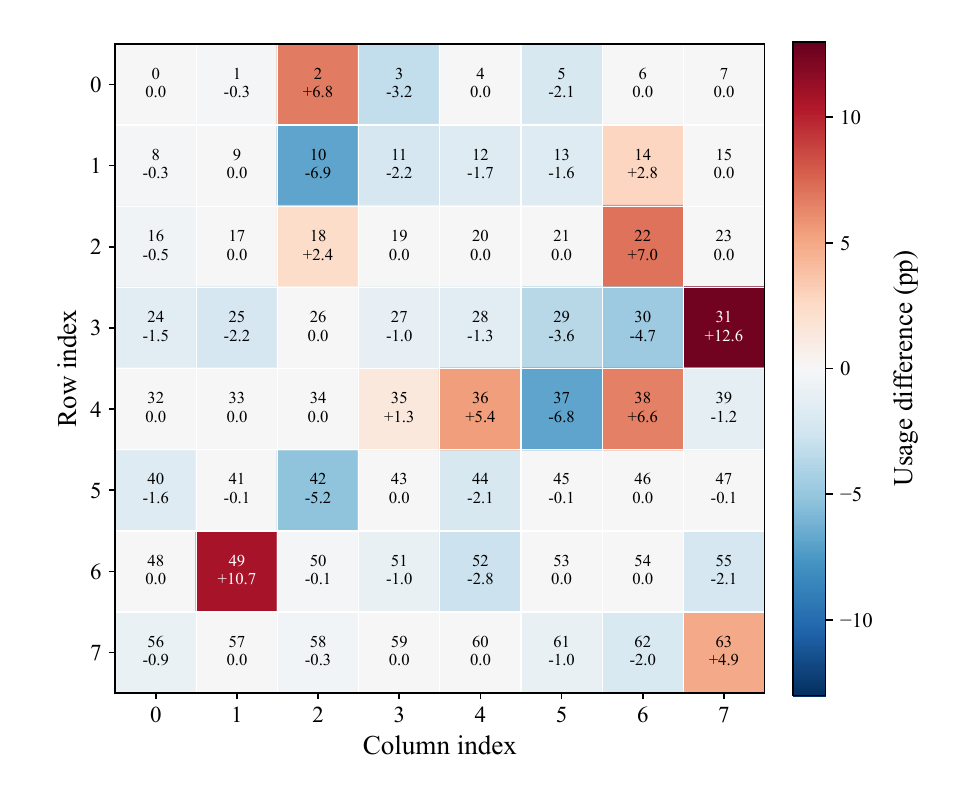}%
        \label{fig:state_codebook_usage_difference}%
    }\hfill
    \subfloat[I-Token usage difference]{%
        \includegraphics[width=0.24\textwidth]{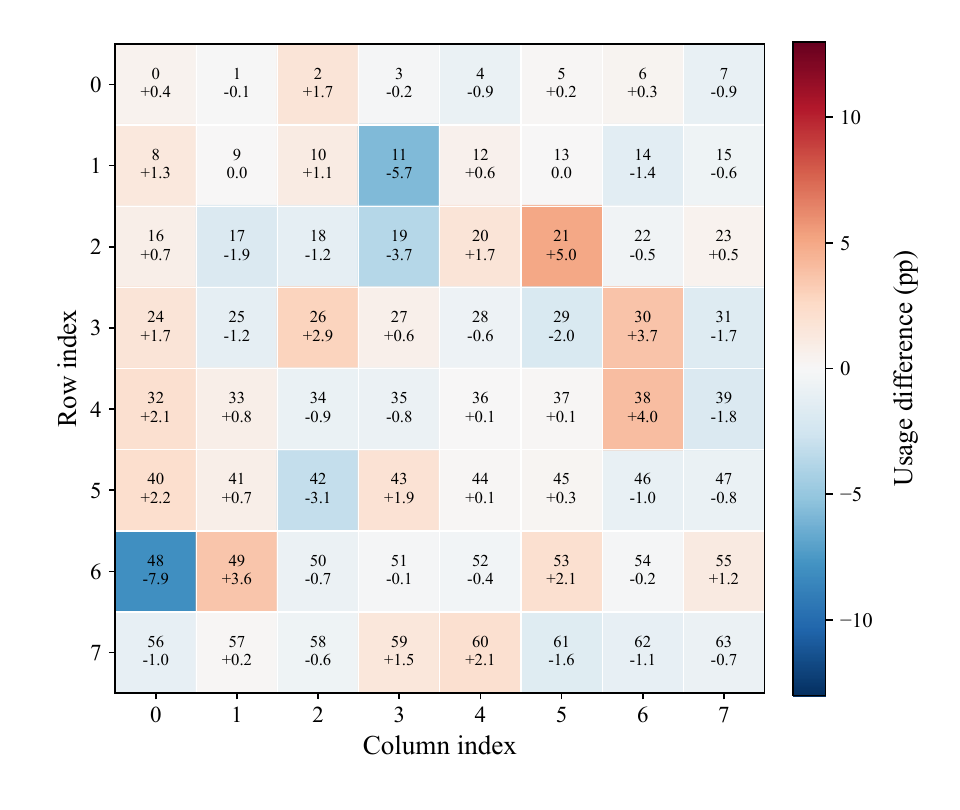}%
        \label{fig:intent_codebook_usage_difference}%
    }
    \caption{Token usage statistics: (a), (b) usage frequencies in scenario A; (c), (d) usage-frequency differences between scenarios B and A (scenario B minus scenario A). Frequencies are expressed as percentages and differences as percentage points.}
    \label{fig:codebook_usage_difference}
\end{figure*}

Fig.~\ref{fig:sensing_spatial_performance} compares the terminal SPEB in a sensing-only task with $N_s=1$ and $N_c=0$, where each UAV dedicates all its bandwidth and transmit beams to sensing. The sensing SNR therefore becomes $\gamma_{k,j}^{s}[t] = G_p \| \rho_{k,j}^{s}[t] {\mathbf{a}_{k,j}^{s}[t]}^{H} \mathbf{W}_{k}[t] \|_2^2 / N_0 B_k$, where $\mathbf{W}_{k}[t] = [ \mathbf{w}_{k,s_1}[t], \dots, \mathbf{w}_{k,s_{N_s}}[t] ]$ denotes the sensing beamforming matrix of UAV $k$.
Fig.~\ref{fig:sensing_spatial_performance}(a) shows the spatial SPEB distribution, Fig.~\ref{fig:sensing_spatial_performance}(b) presents the central slices at $x=300~\mathrm{m}$ and $y=300~\mathrm{m}$, and Fig.~\ref{fig:sensing_spatial_performance}(c) reports the empirical cumulative distribution function (CDF) over all sampled ST locations. The location-dependent peaks in (b) highlight the influence of sensing geometry on localization performance. SI-TokCom generally achieves lower SPEB than Local and can outperform Centralized along both slices. Its generally higher CDF further indicates that a larger fraction of sampled locations satisfies a given SPEB threshold, suggesting that the gains extend beyond the selected slices. These gains may stem from scenario-relevant token selection and explicit intent exchange, which help anticipate other UAVs' actions and improve sensing coordination. Although Centralized has access to global information, SI-TokCom's intent prediction may offer a more pronounced advantage in this simpler scenario.

\begin{figure*}[t]
    \centering
    \subfloat[Spatial SPEB distribution]{%
        \includegraphics[width=0.32\textwidth]{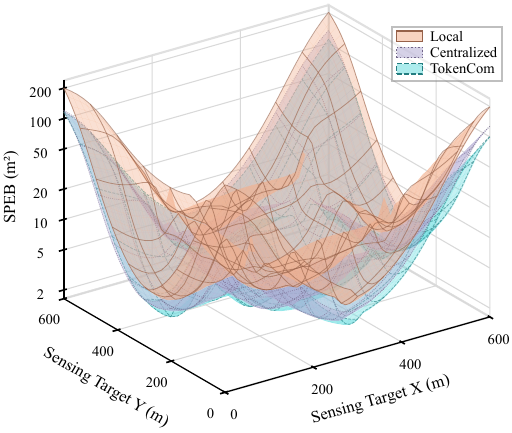}%
        \label{fig:sensing_speb_surface}%
    }\hfill
    \subfloat[Central SPEB slices]{%
        \includegraphics[width=0.32\textwidth]{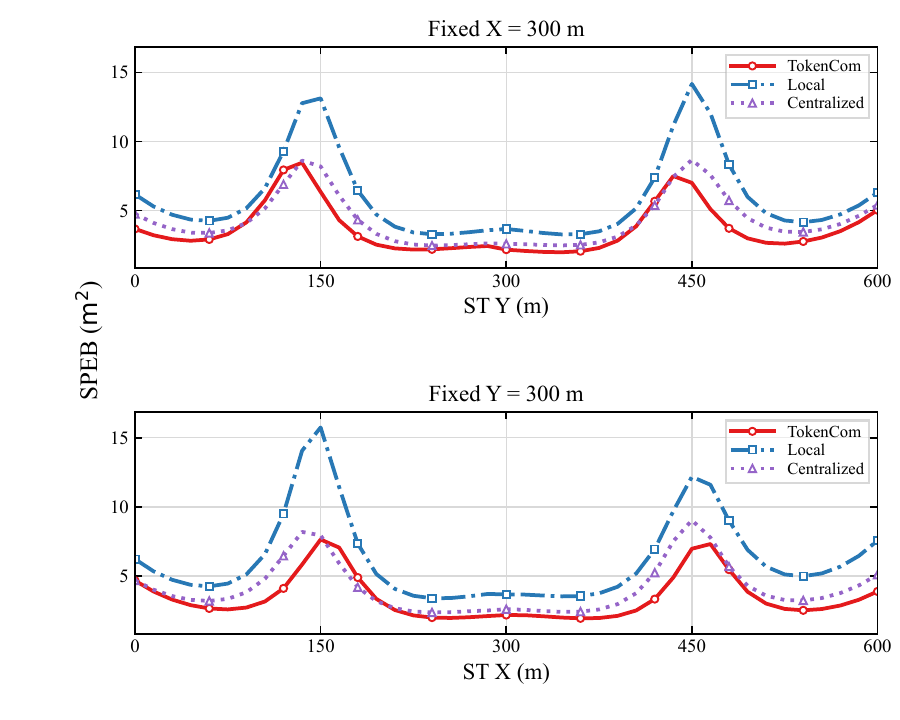}%
        \label{fig:sensing_speb_slice_x_y}%
    }\hfill
    \subfloat[Empirical CDF of SPEB]{%
        \includegraphics[width=0.32\textwidth]{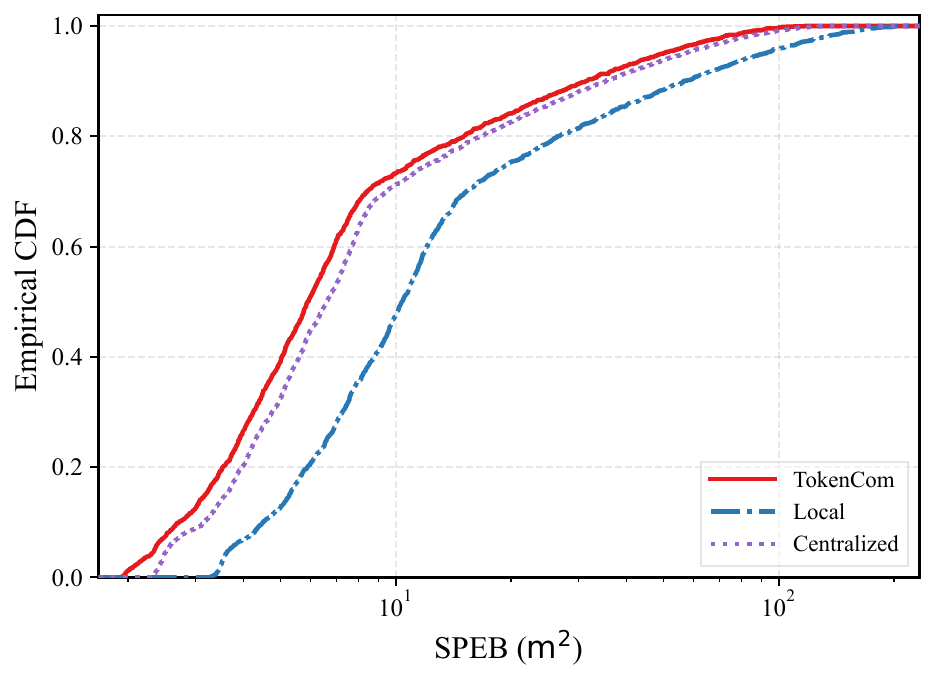}%
        \label{fig:spatial_cdf_compare}%
    }
    \caption{Terminal SPEB at $t=12$ in a sensing-only test with $N_s=1$ and $N_c=0$: (a) spatial distribution over the ST position grid; (b) central slices at $x=300~\mathrm{m}$ (upper) and $y=300~\mathrm{m}$ (lower); and (c) empirical CDF over all sampled ST locations.}
    \label{fig:sensing_spatial_performance}
\end{figure*}

\section{Conclusion}
\label{sec:conclusion}

This paper developed SI-TokCom, an embodied multi-agent framework for cooperative multi-UAV ISAC. We formulated a joint TokCom and physical control problem to minimize propulsion energy subject to communication and sensing requirements. The proposed alternating optimization algorithm combined MAPPO-based joint learning of token selection, composition, and physical control with closed-form beamforming updates. Simulation results showed that SI-TokCom achieved 98.9\% and 99.1\% of the communication and sensing rates of Centralized, respectively, with a coordination payload of 96 bits per UAV per slot. Compared with Local, SI-TokCom improved these rates by 5.0\% and 43.6\%, respectively, with comparable propulsion energy. Changes in token usage across the evaluated scenarios further indicated that fixed semantic codebooks could support adaptive coordination. These findings suggest that jointly learning compact messages and physical decisions can enable effective decentralized cooperation in multi-UAV ISAC.

\begin{appendices}

\section{Derivation of the Closed-Form Beamforming Update}
\label{app:fp_beamforming}
First, we rewrite the objective function in \eqref{problem:SP1_obj} based on the Lagrangian dual transform and quadratic transform under the FP framework~\cite{FP}, which can be expressed as

\begin{equation}
\label{eq:FP_transformed_objective}
    \begin{aligned}
        \widetilde{G}\left(\mathbf{W}_{k,i}[t]\right)
    &= \sum_{r=1}^{N_s+1}
    \Bigl(
    2\operatorname{Re}
    \left\{
    \boldsymbol{\varphi}_r^{H}[t]
    \mathbf{w}_{k,i,r}[t]
    \right\}
    \\
    &\qquad -
    \mathbf{w}_{k,i,r}^{H}[t]
    \boldsymbol{\Lambda}_r[t]
    \mathbf{w}_{k,i,r}[t]
    \Bigr)
    +
    B[t],
    \end{aligned}
\end{equation}
where $\boldsymbol{\varphi}_r[t]$, $\boldsymbol{\Lambda}_r[t]$, and $B[t]$ are defined in \eqref{eq:FP_phi}, \eqref{eq:FP_Lambda}, and \eqref{eq:FP_B}, respectively.
The Lagrangian dual transform introduces the auxiliary variables $\mu_j^{s}$ and $\mu^{c}$, whose optimal updates are given by
\stepcounter{equation}\edef 
\FPBnumber 
{\theequation} 
\begin{equation}
\mu_j^{s\star}
=
\frac{
\alpha_{k,i}[t]G_p
\left\|
\rho_{k,j}^{s}
(\mathbf{a}_{k,j}^{s})^H
\mathbf{W}_{k,i}[t]
\right\|_2^2
} { N_0 b_{k,i}[t]},
\label{eq:mu_s_opt}
\end{equation}
\begin{equation}
\mu^{c\star}
=
\frac{
\left|
(\mathbf h_{k,i}^{c})^H\mathbf w_{k,i}
\right|^2
}{
\sum_{r=1}^{N_s}
\left|
(\mathbf h_{k,i}^{c})^H\mathbf w_{k,i,r}
\right|^2
+N_0b_{k,i}[t]
}.
\label{eq:mu_c_opt}
\end{equation}
The quadratic transform further decouples the remaining ratios by introducing $\boldsymbol{\xi}_j^{s}$ and $\xi^{c}$, whose optimal updates are given by 
\begin{equation}
\boldsymbol{\xi}_j^{s\star} = \frac{ \alpha_{k,i}[t]\sqrt{1+\mu_j^{s}}\sqrt{G_p} \rho_{k,j}^{s} (\mathbf{a}_{k,j}^{s})^H \mathbf{W}_{k,i}
}{ \alpha_{k,i}[t]G_p
\left\|
\rho_{k,j}^{s}
(\mathbf{a}_{k,j}^{s})^H
\mathbf{W}_{k,i}
\right\|_2^2 + N_0b_{k,i}[t]
},
\label{eq:xi_s_opt}
\end{equation}
\begin{equation}
\xi^{c\star}
=
\frac{
\sqrt{1+\mu^{c}}
(\mathbf h_{k,i}^{c})^H\mathbf w_{k,i,1}
}{
\sum_{r=1}^{N_s+1}
\left|
(\mathbf h_{k,i}^{c})^H\mathbf w_{k,i,r}
\right|^2
+N_0b_{k,i}[t]
},
\label{eq:xi_c_opt}
\end{equation}
where
$\boldsymbol{\xi}_j^{s}
=[\xi_{j,1}^{s},\ldots,\xi_{j,N_s+1}^{s}]$ is a row vector. 



For fixed auxiliary variables, $\boldsymbol{\Lambda}_r\succeq\mathbf0$, and the resulting beamforming subproblem is convex. By omitting the constant term $B$, the Lagrangian associated with the power constraint is expressed as
\begin{equation}
\begin{aligned}
\mathcal{L}\left(\mathbf{W}_{k,i},\lambda\right)
=&
\sum_{r=1}^{N_s+1}
\left(
\mathbf{w}_{k,i,r}^{H}
\boldsymbol{\Lambda}_r
\mathbf{w}_{k,i,r}
-
2\operatorname{Re}
\left\{
\boldsymbol{\varphi}_r^{H}
\mathbf{w}_{k,i,r}
\right\}
\right)
\\
&+
\lambda
\left(
\sum_{r=1}^{N_s+1}
\left\|
\mathbf{w}_{k,i,r}
\right\|_2^2
-
p_{k,i}[t]
\right),
\end{aligned}
\label{eq:beamforming_Lagrangian}
\end{equation}
where $\lambda\geq0$ is the dual variable associated with the power constraint. For $p_{k,i}[t]>0$, Slater's condition holds, so the KKT conditions are necessary and sufficient. Stationarity requires
\begin{equation}
\frac{\partial\mathcal L}{\partial\mathbf w_{k,i,r}^*}
=
\left( \boldsymbol{\Lambda}_r+\lambda\mathbf I_M \right)\mathbf w_{k,i,r}
-\boldsymbol{\varphi}_r = \mathbf0, 
\label{eq:beamforming_KKT}
\end{equation}
which yields \eqref{eq:beamforming_closed_form}.
The optimal multiplier follows from complementary slackness:
\begin{equation}
\lambda^{\star}
\left(
\sum_{r=1}^{N_s+1}
\|\mathbf w_{k,i,r}^{\star}\|_2^2-p_{k,i}[t]
\right)
=0.
\label{eq:app_complementary_slackness}
\end{equation}
For $\lambda>0$, define
\begin{equation}
h(\lambda)
=
\sum_{r=1}^{N_s+1}
\left\|
\left(
\boldsymbol{\Lambda}_r+\lambda\mathbf I_M
\right)^{-1}
\boldsymbol{\varphi}_r
\right\|_2^2
-p_{k,i}[t],
\label{eq:dual_function_lambda}
\end{equation}
with
$h(0)=\sum_{r=1}^{N_s+1}
\|\boldsymbol{\Lambda}_r^{\dagger}\boldsymbol{\varphi}_r\|_2^2
-p_{k,i}[t]$.
If $h(0)\leq0$, then $\lambda^{\star}=0$. Otherwise, the power constraint is active, and $\lambda^{\star}>0$ satisfies $h(\lambda^{\star})=0$. Differentiating \eqref{eq:dual_function_lambda} gives
\begin{equation}
h'(\lambda)
=
-2\sum_{r=1}^{N_s+1}
\boldsymbol{\varphi}_r^H
\left(
\boldsymbol{\Lambda}_r+\lambda\mathbf I_M
\right)^{-3}
\boldsymbol{\varphi}_r
\leq0.
\label{eq:app_dual_monotonicity}
\end{equation}
The inequality is strict when $h(0)>0$. Since $\lim_{\lambda\to\infty}h(\lambda)=-p_{k,i}[t]<0$, the positive root is unique and can be efficiently obtained via bisection.
This completes the proof of Theorem~\ref{thm:beamforming_update}.

\section{Derivation of the CRB for Planar Position}
\label{app:position_crb}

The distance measurements obtained by the $K$ UAVs at time slot $t$
are collected as 
\begin{equation}
\widetilde{\mathbf d}_j^{s}[t]
=
\mathbf d_j^{s}[t]
+
\mathbf e_j^{s}[t]
\sim
\mathcal{N}
\left(
\mathbf d_j^{s}[t],
\mathbf C_j^{s}[t]
\right),
\label{eq:app_range_measurement}
\end{equation}
where
$\mathbf d_j^{s}[t]
=[d_{1,j}^{s}[t],\ldots,d_{K,j}^{s}[t]]^{T}$ and
$ \mathbf C_j^{s}[t] = \operatorname{diag}\left( \left(\sigma_{1,j}^{d}[t]\right)^2,
\ldots,\left(\sigma_{K,j}^{d}[t]\right)^2 \right)$.
The individual ranging error is modeled as
\begin{equation}
e_{k,j}^{s}[t]
\sim
\mathcal{N}
\left(
0,
\left(\sigma_{k,j}^{d}[t]\right)^2
\right),
\label{eq:app_ranging_error}
\end{equation}
where $\left(\sigma_{k,j}^{d}[t]\right)^2
=\zeta_s/\gamma_{k,j}^{s}[t]$
and $\zeta_s$ denotes the ranging-error coefficient.
By aggregating the sensing information carried by the allocated
subchannels, the bandwidth-weighted sensing SNR is defined as
$\gamma_{k,j}^{s}[t]
=\sum_{i=1}^{N_c}
\alpha_{k,i}[t]\gamma_{k,j,i}^{s}[t]
\left(b_{k,i}[t]/b_0\right)^3$,
where $b_0$ is the reference bandwidth used for normalization.

Based on the above Gaussian ranging model, the FIM $\mathbf J_j^{d}[t]$ for the distance parameters is given by \eqref{eq:app_distance_fim_general}, where
$\chi_{k,j}^{s}[t] \triangleq
\sum_{i=1}^{N_c} \frac{\alpha_{k,i}[t]b_{k,i}[t]^2}{b_0^3} \left\| (\mathbf a_{k,j}^{s}[t])^H \mathbf W_{k,i}[t] \right\|_2^2$ 
and $\mathbf D=\operatorname{diag}(0,1,\ldots,M-1)$.



\begin{figure*}[t]

\begin{equation*}
\tag{\FPBnumber}  
\begin{aligned}
B[t]
=&
\omega_1 b_{k,i}[t]
\sum_{j=1}^{N_s}
\log\left(1+\mu_j^{s}\right)
+
\omega_2 b_{k,i}[t]
\log\left(1+\mu^{c}\right)
-
\omega_1 b_{k,i}[t]\mu_j^{s}
-
\omega_2 b_{k,i}[t]\mu^{c}
\\
&-
\omega_2 b_{k,i}[t]
\left|\xi^{c}\right|^2
N_0 b_{k,i}[t]
-
\omega_1 b_{k,i}[t]
\left|\xi_j^{s}\right|^2
N_0 b_{k,i}[t].
\end{aligned}
\label{eq:FP_B}
\end{equation*}
\vspace{-0.7cm}

\begin{equation}
\begin{aligned}
&\left[\mathbf J_j^{d}[t]\right]_{p,q}
=
\left[
\frac{\partial\mathbf d_j^{s}[t]}{\partial d_{p,j}^{s}[t]}
\right]^{T}
\left(\mathbf C_j^{s}[t]\right)^{-1}
\left[
\frac{
\partial\mathbf d_j^{s}[t]
}{
\partial d_{q,j}^{s}[t]
}
\right]
+
\frac{1}{2}
\operatorname{tr}
\left[
\left(\mathbf C_j^{s}[t]\right)^{-1}
\frac{
\partial\mathbf C_j^{s}[t]
}{
\partial d_{p,j}^{s}[t]
}
\left(\mathbf C_j^{s}[t]\right)^{-1}
\frac{
\partial\mathbf C_j^{s}[t]
}{
\partial d_{q,j}^{s}[t]
}
\right] \\
& = \frac{G_p \beta_{\rm s}  }{ \zeta_s  N_0}  \frac{ \chi_{k,j}^{s}[t] }{ \left(d_{k,j}^{s}[t]\right)^{4} } + 
\frac{1}{2} \left( -\frac{4 \pi d_a H}{ \lambda \chi_{k,j}^{s}[t] \left( d_{k,j}^{s} \right)^{2} } \sum_{i=1}^{N_c}\frac{\alpha_{k,i}[t]b_{k,i}[t]^2}{b_0^3}  
\!\operatorname{Im} \left\{ \left( (\mathbf a_{k,j}^{s}[t])^H \mathbf W_{k,i}[t] \right)^{*} (\mathbf a_{k,j}^{s}[t])^H  \mathbf{D}\mathbf{W}_{k,i}[t] \right\} + \frac{4}{ d_{k,j}^{s}[t]} \right)^2, \\
& \quad p=q=k, k=1,\ldots,K.
\end{aligned}
\label{eq:app_distance_fim_general}
\end{equation}


\hrulefill
\end{figure*}

Accordingly, the FIM for the planar target position is obtained through the Jacobian transformation as
\begin{equation}
\mathbf J_j^{p}[t]
=
\mathbf Q_j[t]
\mathbf J_j^{d}[t]
\mathbf Q_j^{T}[t],
\label{eq:app_position_fim_transform}
\end{equation}
where $\mathbf Q_j[t]=[\mathbf q_{1,j}[t],\ldots,\mathbf q_{K,j}[t]]$
is the Jacobian matrix, whose $k$th column is
\begin{equation}
\mathbf q_{k,j}[t]
=
\begin{bmatrix} 
\frac{\partial d_{k,j}^{s}[t]}{\partial x_j^{s}}\\ 
\frac{\partial d_{k,j}^{s}[t]}{\partial y_j^{s}} 
\end{bmatrix}
=
\frac{1}{d_{k,j}^{s}[t]}
\begin{bmatrix}
\Delta x_{k,j}[t]\\
\Delta y_{k,j}[t]
\end{bmatrix},
\label{eq:app_distance_gradient}
\end{equation}
where $\Delta x_{k,j}[t]=x_j^{s}-x_k[t]$ and $\Delta y_{k,j}[t]=y_j^{s}-y_k[t]$.
Substituting \eqref{eq:app_distance_gradient} into
\eqref{eq:app_position_fim_transform} yields
\begin{equation}
\begin{aligned}
\mathbf J_j^{p}[t]
=
\sum_{k=1}^{K}
J_{j,k}^{d}[t]
\mathbf q_{k,j}[t]\mathbf q_{k,j}^{T}[t]
=
\begin{bmatrix}
\Theta_j^{a}[t] & \Theta_j^{c}[t]\\
\Theta_j^{c}[t] & \Theta_j^{b}[t]
\end{bmatrix},
\end{aligned}
\label{eq:app_position_fim_matrix}
\end{equation}
where $\Theta_j^{a}[t]$, $\Theta_j^{b}[t]$, and $\Theta_j^{c}[t]$
are given by 
\begin{equation}
\Theta_j^{a}[t]
=
\sum_{k=1}^{K}
\left(
\frac{\Delta x_{k,j}[t]}{d_{k,j}^{s}[t]}
\right)^2
\left[\mathbf J_j^{d}[t]\right]_{k,k},
\label{eq:app_theta_a}
\end{equation}
\begin{equation}
\Theta_j^{b}[t]
=
\sum_{k=1}^{K}
\left(
\frac{\Delta y_{k,j}[t]}{d_{k,j}^{s}[t]}
\right)^2
\left[\mathbf J_j^{d}[t]\right]_{k,k},
\label{eq:app_theta_b}
\end{equation}

\begin{equation}
\Theta_j^{c}[t]
=
\sum_{k=1}^{K}
\frac{\Delta x_{k,j}[t]\Delta y_{k,j}[t]}
{\left(d_{k,j}^{s}[t]\right)^2}
\left[\mathbf J_j^{d}[t]\right]_{k,k},
\label{eq:app_theta_c}
\end{equation}
where $\left[\mathbf J_j^{d}[t]\right]_{k,k}$ is given by
\eqref{eq:app_distance_fim_general} with $p=q=k$.
Substituting \eqref{eq:app_position_fim_matrix} into the definition of the planar position CRB yields
\begin{equation}
\begin{aligned}
\mathrm{CRB}_j[t]
&=
\left(\mathbf J_j^{p}[t]\right)^{-1}
\\
&=
\frac{1}{
\Theta_j^{a}[t]\Theta_j^{b}[t]
-
\left(\Theta_j^{c}[t]\right)^2
}
\begin{bmatrix}
\Theta_j^{b}[t] & -\Theta_j^{c}[t]\\
-\Theta_j^{c}[t] & \Theta_j^{a}[t]
\end{bmatrix}.
\end{aligned}
\label{eq:app_position_crb}
\end{equation}

Finally, taking the trace of the CRB matrix in \eqref{eq:app_position_crb} yields the SPEB for ST $j$
\begin{equation}
\mathrm{SPEB}_j[t]
=
\operatorname{tr}\left(\mathrm{CRB}_j[t]\right)
=
\frac{
\Theta_j^{a}[t]+\Theta_j^{b}[t]
}{
\Theta_j^{a}[t]\Theta_j^{b}[t]
-
\left(\Theta_j^{c}[t]\right)^2
}.
\label{eq:app_speb}
\end{equation}

\end{appendices}

\bibliographystyle{IEEEtran}
\bibliography{ref}

@IEEEtranBSTCTL{IEEEauthorControl,
  CTLuse_forced_etal       = "yes",
  CTLmax_names_forced_etal = "6",
  CTLnames_show_etal       = "1"
}

@ARTICLE{letaief2021edge,
  author={Letaief, Khaled B. and Shi, Yuanming and Lu, Jianmin and Lu, Jianhua},
  journal={IEEE J. Sel. Areas Commun.},
  title={Edge artificial intelligence for {6G}: Vision, enabling technologies, and applications},
  year={2022},
  month={Jan.},
  volume={40},
  number={1},
  pages={5--36},
  doi={10.1109/JSAC.2021.3126076}
}

@ARTICLE{Liu2022ISAC,
  author={Liu, Fan and Cui, Yuanhao and Masouros, Christos and Xu, Jie and Han, Tony Xiao and Eldar, Yonina C. and Buzzi, Stefano},
  journal={IEEE J. Sel. Areas Commun.},
  title={Integrated Sensing and Communications: Toward Dual-Functional Wireless Networks for {6G} and Beyond},
  year={2022},
  month={Jun.},
  volume={40},
  number={6},
  pages={1728--1767},
  doi={10.1109/JSAC.2022.3156632}
}

@ARTICLE{xiu2025movable,
  author={Xiu, Yue and Yang, Songjie and Lyu, Wanting and Yeoh, Phee Lep and Li, Yonghui and Ai, Yi},
  journal={IEEE Wireless Commun. Lett.},
  title={Movable antenna enabled {ISAC} beamforming design for low-altitude airborne vehicles},
  year={2025},
  month={May},
  volume={14},
  number={5},
  pages={1311--1315},
  doi={10.1109/LWC.2025.3538587}
}

@ARTICLE{yang2026embodied,
  author={Yang, Helin and Du, Honglin and Geng, Qing and Xu, Changyuan and Xiong, Zehui},
  journal={IEEE Trans. Wireless Commun.},
  title={Embodied intelligence-enhanced anti-jamming resource allocation for low-altitude communication networks},
  year={2026},
  volume={25},
  pages={14750--14765},
  doi={10.1109/TWC.2026.3677272}
}

@MISC{ma2026semantic,
  author={Ma, Yu and Gao, Zhen and Qiao, Li and Zhang, Xiaoyuan and Mashhadi, Mahdi Boloursaz and Xu, Yin and Xu, Wenjun and Xu, Xiaodong and Huang, Kaibin and Wang, Jiangzhou and Tafazolli, Rahim and Chen, Sheng and Quek, Tony Q. S. and Zhang, Ping},
  title={From Semantic to Token Communication: The Next Paradigm for Large-Model-Driven {6G} Intelligent Connectivity},
  year={2026},
  note={arXiv:2609.10714}
}

@ARTICLE{xiu2024improving,
  author={Xiu, Yue and Lyu, Wanting and Yeoh, Phee Lep and Ai, Yi and Li, Yonghui and Wei, Ning},
  journal={IEEE Trans. Veh. Technol.},
  title={Improving physical-layer security in {ISAC-AAV} system: Beamforming and trajectory optimization},
  year={2025},
  month={Feb.},
  volume={74},
  number={2},
  pages={3503--3508},
  doi={10.1109/TVT.2024.3477925}
}

@ARTICLE{Pan2024CooperativeISAC,
  author={Pan, Yu and Li, Ruoguang and Da, Xinyu and Hu, Hang and Zhang, Miao and Zhai, Dong and Cumanan, Kanapathippillai and Dobre, Octavia A.},
  journal={IEEE Trans. Veh. Technol.},
  title={Cooperative Trajectory Planning and Resource Allocation for {UAV}-Enabled Integrated Sensing and Communication Systems},
  year={2024},
  month={May},
  volume={73},
  number={5},
  pages={6502--6516},
  doi={10.1109/TVT.2023.3337106}
}

@ARTICLE{Zhou2026JRC,
  author={Zhou, Lingyun and Yang, Chunyong and Cui, Yongqiang and Zhang, Rongqing and Wei, Zhongxiang and Shi, Qingjiang},
  journal={IEEE Trans. Wireless Commun.},
  title={Joint Deployment and Resource Allocation Design for {JRC}-Enabled Multi-{UAV} Cooperative Systems},
  year={2026},
  volume={25},
  pages={8051--8065},
  doi={10.1109/TWC.2025.3635277}
}

@ARTICLE{Qin2023DRLISAC,
  author={Qin, Yunhui and Zhang, Zhongshan and Li, Xulong and Huangfu, Wei and Zhang, Haijun},
  journal={IEEE Trans. Wireless Commun.},
  title={Deep Reinforcement Learning Based Resource Allocation and Trajectory Planning in Integrated Sensing and Communications {UAV} Network},
  year={2023},
  month={Nov.},
  volume={22},
  number={11},
  pages={8158--8169},
  doi={10.1109/TWC.2023.3260304}
}

@ARTICLE{Xie2024DistributedISAC,
  author={Xie, Zhanyuan and Wang, Ziyuan and Zhang, Zekai and Wang, Jingjing and Jiang, Zheng and Han, Zhu},
  journal={IEEE Trans. Veh. Technol.},
  title={Distributed {UAV} Swarm for Device-Free Integrated Sensing and Communication Relying on Multi-Agent Reinforcement Learning},
  year={2024},
  month={Dec.},
  volume={73},
  number={12},
  pages={19925--19930},
  doi={10.1109/TVT.2024.3438854}
}

@ARTICLE{Peng2025ISCC,
  author={Peng, Sicong and Li, Bin and Liu, Lei and Fei, Zesong and Niyato, Dusit},
  journal={IEEE Trans. Commun.},
  title={Trajectory Design and Resource Allocation for Multi-{UAV}-Assisted Sensing, Communication, and Edge Computing Integration},
  year={2025},
  month={Apr.},
  volume={73},
  number={4},
  pages={2847--2861},
  doi={10.1109/TCOMM.2024.3478115}
}

@ARTICLE{Liu2025ISCC,
  author={Liu, Yaxi and Mao, Wencan and Li, Xulong and Huangfu, Wei and Ji, Yusheng and Xiao, Yu and Long, Keping},
  journal={IEEE Trans. Commun.},
  title={Joint Task Scheduling and Resource Allocation for {UAV}-Assisted Air-Ground Collaborative Integrated Sensing, Computation, and Communication},
  year={2025},
  month={Nov.},
  volume={73},
  number={11},
  pages={10929--10945},
  doi={10.1109/TCOMM.2025.3591748}
}

@INPROCEEDINGS{deng2025semantic,
  author={Deng, Yundi and Leng, Supeng and Zhang, Ke and He, Jianhua and Zhou, Longyu and Quek, Tony},
  title={Semantic communication assisted cooperative sensing in {UAV} networks},
  booktitle={Proc. 10th Int. Conf. Comput. Commun. Syst. (ICCCS)},
  year={2025},
  month={Apr.},
  pages={862--867},
  doi={10.1109/ICCCS65393.2025.11069451}
}

@ARTICLE{Hu2025SemanticUAV,
  author={Hu, Han and Zhu, Xingwu and Zhou, Fuhui and Wu, Wei and Hu, Rose Qingyang and Zhu, Hongbo},
  journal={IEEE Trans. Commun.},
  title={Resource Allocation for Multi-Modal Semantic Communication in {UAV} Collaborative Networks},
  year={2025},
  month={Sep.},
  volume={73},
  number={9},
  pages={7599--7616},
  doi={10.1109/TCOMM.2025.3552303}
}

@ARTICLE{Guo2026PerceptionISAC,
  author={Guo, Ziji and Liu, Danpu and Zhang, Zhilong},
  journal={IEEE Trans. Veh. Technol.},
  title={Perception Enhanced Multimodal Multitask Semantic Communication and Resource Management for {UAV}-Assisted {ISAC} Systems},
  year={2026},
  month={Jun.},
  volume={75},
  number={6},
  pages={11051--11065},
  doi={10.1109/TVT.2025.3650639}
}

@ARTICLE{guo2025flexible,
  author={Guo, Jiahe and Yang, Songjie and Dong, Xiangyu and Yang, Jiapan and Deng, Junfeng and Zhang, Zhongpei and Yuen, Chau},
  journal={IEEE Internet Things J.},
  title={Flexible cylindrical arrays with movable antennas for {MISO} system: Beamforming and position optimization},
  year={2025},
  month={Sep.},
  volume={12},
  number={17},
  pages={35394--35405},
  doi={10.1109/JIOT.2025.3580372}
}

@ARTICLE{lyu2025movable,
  author={Lyu, Wanting and Yang, Songjie and Xiu, Yue and Zhang, Zhongpei and Assi, Chadi and Yuen, Chau},
  journal={IEEE Trans. Wireless Commun.},
  title={Movable antenna enabled integrated sensing and communication},
  year={2025},
  month={Apr.},
  volume={24},
  number={4},
  pages={2862--2875},
  doi={10.1109/TWC.2025.3525631}
}

@ARTICLE{FP,
  author={Shen, Kaiming and Yu, Wei},
  journal={IEEE Trans. Signal Process.},
  title={Fractional Programming for Communication Systems---Part {I}: Power Control and Beamforming},
  year={2018},
  month={May},
  volume={66},
  number={10},
  pages={2616--2630},
  doi={10.1109/TSP.2018.2812733}
}

@ARTICLE{qiao2025token,
  author={Qiao, Li and Mashhadi, Mahdi Boloursaz and Gao, Zhen and Tafazolli, Rahim and Bennis, Mehdi and Niyato, Dusit},
  journal={IEEE Wireless Commun.},
  title={Token communications: A large model-driven framework for cross-modal context-aware semantic communications},
  year={2025},
  month={Oct.},
  volume={32},
  number={5},
  pages={80--88},
  doi={10.1109/MWC.001.2500084}
}

@ARTICLE{Meng2024UAVISAC,
  author={Meng, Kaitao and Wu, Qingqing and Xu, Jie and Chen, Wen and Feng, Zhiyong and Schober, Robert and Swindlehurst, A. Lee},
  journal={IEEE Wireless Commun.},
  title={{UAV}-Enabled Integrated Sensing and Communication: Opportunities and Challenges},
  year={2024},
  month={Apr.},
  volume={31},
  number={2},
  pages={97--104},
  doi={10.1109/MWC.131.2200442}
}

@ARTICLE{Zeng2019UAVTutorial,
  author={Zeng, Yong and Wu, Qingqing and Zhang, Rui},
  journal={Proc. IEEE},
  title={Accessing From the Sky: A Tutorial on {UAV} Communications for {5G} and Beyond},
  year={2019},
  month={Dec.},
  volume={107},
  number={12},
  pages={2327--2375},
  doi={10.1109/JPROC.2019.2952892}
}

@ARTICLE{Zhu2024CommSurvey,
  author={Zhu, Changxi and Dastani, Mehdi and Wang, Shihan},
  journal={Auton. Agents Multi-Agent Syst.},
  title={A Survey of Multi-Agent Deep Reinforcement Learning With Communication},
  year={2024},
  month={Jun.},
  volume={38},
  number={1},
  doi={10.1007/s10458-023-09633-6},
  note={Art. no. 4}
}

@ARTICLE{Yates2021AoISurvey,
  author={Yates, Roy D. and Sun, Yin and Brown, III, D. Richard and Kaul, Sanjit K. and Modiano, Eytan and Ulukus, Sennur},
  journal={IEEE J. Sel. Areas Commun.},
  title={Age of Information: An Introduction and Survey},
  year={2021},
  month={May},
  volume={39},
  number={5},
  pages={1183--1210},
  doi={10.1109/JSAC.2021.3065072}
}

@ARTICLE{wei2025token,
  author={Wei, Hao and Ni, Wanli and Wang, Wen and Xu, Wenjun and Niyato, Dusit and Zhang, Ping},
  journal={IEEE Wireless Commun. Lett.},
  title={Token communication in the era of large models: An information bottleneck-based approach},
  year={2026},
  volume={15},
  pages={186--190},
  doi={10.1109/LWC.2025.3622628}
}

@MISC{qiao2025todma,
  author={Qiao, Li and Mashhadi, Mahdi Boloursaz and Gao, Zhen and Schober, Robert and G{\"u}nd{\"u}z, Deniz},
  title={{ToDMA}: Large Model-Driven Massive Token Communications for Semantic Multiple Access},
  year={2025},
  note={arXiv:2505.10946}
}

@ARTICLE{jiang2026agentic,
  author={Jiang, Feibo and Mao, Lei and Dong, Li and Wang, Kezhi and Pan, Cunhua and Jamalipour, Abbas},
  journal={IEEE Trans. Cogn. Commun. Netw.},
  title={Agentic {TokenCom}: A Chain-of-Agents Framework for Multimodal Token Communications},
  year={2026},
  volume={12},
  pages={12051--12064},
  doi={10.1109/TCCN.2026.3732515}
}

@ARTICLE{wu2026toward,
  author={Wu, Jun and Yuan, Weijie and Zhang, Xiaoqi and Yu, Yaohuan and Cui, Yuanhao and Liu, Fan and Sun, Geng and Wang, Jiacheng and Niyato, Dusit and Kim, Dong In},
  journal={IEEE Wireless Commun.},
  title={Toward multi-functional {LAWNs} with {ISAC}: Opportunities, challenges, and the road ahead},
  year={2026},
  doi={10.1109/MWC.2026.3672360},
  note={early access, doi: 10.1109/MWC.2026.3672360}
}

@ARTICLE{wei2024integrated,
  author={Wei, Zhiqing and Liu, Haotian and Li, Hujun and Jiang, Wangjun and Feng, Zhiyong and Wu, Huici and Zhang, Ping},
  journal={IEEE Trans. Mobile Comput.},
  title={Integrated sensing and communication enabled cooperative passive sensing using mobile communication system},
  year={2025},
  month={Sep.},
  volume={24},
  number={9},
  pages={7805--7821},
  doi={10.1109/TMC.2024.3514113}
}

@ARTICLE{yang2025cooperative,
  author={Yang, Yan and Li, Zhendong and Zhao, Jianwei and Wu, Qingqing and Wei, Zhiqing and Chen, Wen and Jia, Weimin},
  journal={IEEE Trans. Wireless Commun.},
  title={Cooperative multi-static {ISAC} networks: A unified design framework for active and passive sensing},
  year={2026},
  volume={25},
  pages={5401--5415},
  doi={10.1109/TWC.2025.3618248}
}

@ARTICLE{wang2026multi,
  author={Wang, Jinyu and Zhang, Xianchao and Wei, Zhiqing and Sun, Fengsong and Li, Yiheng and Feng, Zhiyong and Lu, Jun},
  journal={IEEE J. Sel. Topics Signal Process.},
  title={Multi-{UAV} Collaborative {ISAC} with Dynamic Resource Allocation: A Hierarchical Graph Multi-Agent Reinforcement Learning Approach},
  year={2026},
  doi={10.1109/JSTSP.2026.3693225},
  note={early access, doi: 10.1109/JSTSP.2026.3693225}
}

@MISC{park2026geometric,
  author={Park, Joohyuk and Shin, Junyong and Oh, Yongjeong and Park, Jihong and Jeon, Yo-Seb},
  title={Geometric Cross-Modal Token Selection for Latency-Constrained Multimodal Token Communication},
  year={2026},
  note={arXiv:2608.03141}
}

@MISC{shin2026context,
  author={Shin, Junyong and Park, Joohyuk and Oh, Yongjeong and Park, Jihong and Choi, Jinho and Jeon, Yo-Seb},
  title={Context-Aware Wireless Token Communication via Joint Token Masking and Detection},
  year={2026},
  note={arXiv:2605.02123}
}

@MISC{lee2026hierarchical,
  author={Lee, Jihoon and Oh, Seungeun and Park, Jihong and Kim, Seong-Lyun and Ko, Seung-Woo},
  title={Semantics-Aware Hierarchical Token Communication: Clustering, Bit Mapping, and Power Allocation},
  year={2026},
  note={arXiv:2604.27641}
}

@MISC{zhang2026ada,
  author={Zhang, Zijun and Qiao, Li and Mashhadi, Mahdi Boloursaz and Gao, Zhen and Bennis, Mehdi and Huang, Kaibin},
  title={{Ada-TokenCom}: Rate-Adaptive Token Communications via Large-Model-Driven Token Compression and Generation},
  year={2026},
  note={arXiv:2608.28086}
}

@MISC{liu2026airtf,
  author={Liu, Bole and Qiao, Li and Wu, Minghui and Shao, Yulin and Gao, Zhen},
  title={{AirTF}: Over-the-Air Token Fusion for Task-Oriented Multi-Modal Token Communications},
  year={2026},
  note={arXiv:2607.03099}
}

@ARTICLE{xiang2026aligning,
  author={Xiang, Kun and Zhang, Terry Jingchen and Huang, Yinya and He, Jixi and Liu, Zirong and Tang, Yueling and Zhou, Ruizhe and Luo, Lijing and Wen, Youpeng and Chen, Xiuwei and Lin, Bingqian and Han, Jianhua and Xu, Hang and Li, Hanhui and Dong, Bin and Liang, Xiaodan},
  journal={IEEE Trans. Pattern Anal. Mach. Intell.},
  title={Aligning Perception, Reasoning, Modeling and Interaction: A Survey on Physical {AI}},
  year={2026},
  doi={10.1109/TPAMI.2026.3703038},
  note={early access, doi: 10.1109/TPAMI.2026.3703038}
}

@MISC{niu2024comprehensive,
  author={Niu, Haoyi and Hu, Jianming and Zhou, Guyue and Zhan, Xianyuan},
  title={A comprehensive survey of cross-domain policy transfer for embodied agents},
  year={2024},
  note={arXiv:2402.04580}
}

@INPROCEEDINGS{yu2022surprising,
  author={Yu, Chao and Velu, Akash and Vinitsky, Eugene and Gao, Jiaxuan and Wang, Yu and Bayen, Alexandre and Wu, Yi},
  booktitle={Proc. Adv. Neural Inf. Process. Syst. (NeurIPS)},
  title={The Surprising Effectiveness of {PPO} in Cooperative Multi-Agent Games},
  year={2022},
  volume={35},
  pages={24611--24624}
}

@ARTICLE{luo2022semantic,
  author={Luo, Xuewen and Chen, Hsiao-Hwa and Guo, Qing},
  journal={IEEE Wireless Commun.},
  title={Semantic Communications: Overview, Open Issues, and Future Research Directions},
  year={2022},
  month={Feb.},
  volume={29},
  number={1},
  pages={210--219},
  doi={10.1109/MWC.101.2100269}
}

@ARTICLE{jing2024isac,
  author={Jing, Xiaoye and Liu, Fan and Masouros, Christos and Zeng, Yong},
  journal={IEEE Trans. Wireless Commun.},
  title={{ISAC} from the Sky: {UAV} Trajectory Design for Joint Communication and Target Localization},
  year={2024},
  volume={23},
  number={10},
  pages={12857--12872}
}

@ARTICLE{chen2026differential,
  author={Chen, Ziqi and Du, Jun and Jiang, Chunxiao and Hou, Xiangwang and Han, Zhu and Poor, H. Vincent},
  journal={IEEE Trans. Netw. Sci. Eng.},
  title={Differential privacy-based adaptive sparse federated learning in {UAV} networks},
  year={2026},
  volume={13},
  pages={5128--5144},
  doi={10.1109/TNSE.2025.3644438}
}

@ARTICLE{du2020machine,
  author={Du, Jun and Jiang, Chunxiao and Wang, Jian and Ren, Yong and Debbah, M{\'e}rouane},
  journal={IEEE Veh. Technol. Mag.},
  title={Machine learning for {6G} wireless networks: Carrying forward enhanced bandwidth, massive access, and ultrareliable/low-latency service},
  year={2020},
  month={Dec.},
  volume={15},
  number={4},
  pages={122--134},
  doi={10.1109/MVT.2020.3019650}
}

@ARTICLE{zhang2025task,
  author={Zhang, Hualei and Du, Jun and Jiang, Chunxiao and Wang, Jintao and Bader, Faouzi and Debbah, Merouane},
  journal={IEEE Trans. Veh. Technol.},
  title={Task offloading in {UAV}-assisted mobile cloud-edge computing networks: An {AoP}-aware {HAPPO} approach},
  year={2025},
  month={Sep.},
  volume={74},
  number={9},
  pages={14745--14759},
  doi={10.1109/TVT.2025.3566431}
}

@ARTICLE{yi2026token,
  author={Yi, Peng and Liang, Ying-Chang},
  journal={arXiv preprint arXiv:2610.01826},
  title={Token communication-assisted collaborative embodied artificial intelligence: Concepts, framework, and opportunities},
  year={2026},
  month={Oct.},
  doi={10.48550/arXiv.2610.01826}
}

@ARTICLE{khalili2024efficient,
  author={Khalili, A. and Rezaei, A. and Xu, D. and Dressler, F. and Schober, R.},
  journal={IEEE Trans. Wireless Commun.},
  title={Efficient {UAV} hovering, resource allocation, and trajectory design for {ISAC} with limited backhaul capacity},
  year={2024},
  month={Nov.},
  volume={23},
  number={11},
  pages={17635--17650},
  doi={10.1109/TWC.2024.3455370}
}

\end{document}